\newtheorem{remark}{Remark}
\newtheorem{theorem}{Theorem}
\newtheorem{corollary}{Corollary}
\newtheorem{proof}{Proof}
\begin{document}
%
\title{Tight Capacity Bounds for Indoor Visible Light Communications With Signal-Dependent Noise}

\author{Jin-Yuan Wang,  \IEEEmembership{Member, IEEE},
        Xian-Tao Fu,
        Rong-Rong Lu,\\
        Jun-Bo Wang, \IEEEmembership{Member, IEEE},
        Min Lin,  \IEEEmembership{Member, IEEE},\\
        and Julian Cheng, \IEEEmembership{Senior Member, IEEE}
\thanks{Part of the paper was presented at\emph{ IEEE Global Communications Conference}, Austin, TX, USA \cite{BIB21_1}.}
\thanks{This work was supported in part by the National Natural Science Foundation of China under Grant 61701254, the Open Project of Shanghai Key Laboratory of Trustworthy Computing, the National Key R\&D Program of China under Grants 2018YFB2202200 \& 2020YFB1807803, and the Key International Cooperation Research Project under Grant 61720106003.}
\thanks{Jin-Yuan Wang is with Key Laboratory of Broadband Wireless Communication and Sensor Network Technology, Nanjing University of Posts and Telecommunications, Nanjing 210003, China, and also with Shanghai Key Laboratory of Trustworthy Computing, East China Normal University, Shanghai 200062, China (corresponding author, e-mail: jywang@njupt.edu.cn).}
\thanks{Xian-Tao Fu, Rong-Rong Lu, and Min Lin are with Key Laboratory of Broadband Wireless Communication and Sensor Network Technology, Nanjing University of Posts and Telecommunications, Nanjing 210003, China (e-mail: 1019010206@njupt.edu.cn, 1019010205@njupt.edu.cn, linmin@njupt.edu.cn).}
\thanks{Jun-Bo Wang is with National Mobile Communications Research Laboratory, Southeast University, Nanjing 211111, China (e-mail: jbwang@seu.edu.cn).}
\thanks{Julian Cheng is with School of Engineering, The University of British Columbia, Kelowna, BC, Canada V1V 1V7 (e-mail: julian.cheng@ubc.ca).}
}

\maketitle \linespread{1.45}
\begin{abstract}
Channel capacity bounds are derived for a point-to-point indoor visible light communications (VLC) system with signal-dependent Gaussian noise.
Considering both illumination and communication, the non-negative input of VLC is constrained by peak and average optical intensity constraints.
Two scenarios are taken into account: one scenario has both average and peak optical intensity constraints,
and the other scenario has only average optical intensity constraint.
For both two scenarios,
we derive closed-from expressions of capacity lower and upper bounds.
Specifically, the capacity lower bound is derived by using the variational method and the property that the output entropy is invariably larger than the input entropy.
The capacity upper bound is obtained by utilizing the dual expression of capacity and the principle of ``capacity-achieving source distributions that escape to infinity".
Moreover, the asymptotic analysis shows that the asymptotic performance gap between the capacity lower and upper bounds approaches zero.
Finally, all derived capacity bounds are confirmed using numerical results.
\end{abstract}

\begin{keywords}
Channel capacity, Signal-dependent Gaussian noise, Tight bounds, Visible light communications.
\end{keywords}

\IEEEpeerreviewmaketitle

\newpage
\baselineskip=8.5mm

\section{Introduction}
\label{section1}
The idea of using light as a communication medium was first proposed by Bell in 1880 \cite{BIB01}.
Due to technical limitations, such a communication method has not drawn much attention for a long period.
With the fast development of light-emitting diodes (LEDs),
visible light communications (VLC) has attracted considerable interest recently \cite{BIB02,BIB03}.
As a compelling alternative technology for radio frequency (RF) communications,
VLC is free from RF interference and is license-free \cite{BIB04}.
Undoubtedly, VLC has become a promising candidate for future wireless communications.

When evaluating a communication system,
channel capacity is an important performance metric.
While several works have analyzed the capacity for free-space optical (FSO) communication,
owing to some unique features of VLC, these theoretical results derived for FSO channels cannot be directly applied to indoor VLC.
First, indoor VLC can simultaneously provide communication and illumination \cite{BIB15}, but FSO communication is only used for communication.
Second, due to the indoor illumination requirement, the brightness of light in VLC does not fluctuate with time, but it can be adjusted according to dimming requirement \cite{BIB16}. Therefore, the average optical intensity in VLC is constrained to be equal to a user-defined brightness target.
However, the average optical intensity in FSO communication must be less than a specific safety level to avoid eye and skin injury.
In other words, a lower average optical intensity is often preferred for FSO communication.
Moreover, due to the luminous capability of LED, the peak optical intensity in VLC should be less than an allowed threshold. Consequently, the average and peak constraints on optical intensity must be considered in practical VLC systems.
Furthermore, the Gaussian noises in VLC are often assumed to be independent of the input signal.
Such an assumption is reasonable when the ambient light or thermal noise is large.
Nevertheless, typical illumination environments in VLC result in large signal-to-noise ratio (SNR) \cite{BIB20,BIB21}.
In the high SNR regime, such a signal-noise-independent assumption ignores a basic issue:
because the LED in VLC can randomly radiate photons,  the value of noise relies on the input signal \cite{BIB21_1}, which suggests that the signal-dependent noise should also be considered for indoor VLC.
However, the input constraints and the signal-dependent noise impose tremendous challenges in calculating the channel capacity of VLC.

In this paper, the capacity of an indoor VLC system will be investigated.
Because finding an exact closed-form capacity expression is challenging, this paper derives tight capacity bounds.

\subsection{Related Works}
In FSO communication, Poisson and Gaussian channels are often adopted for capacity analysis.
Under Poisson channels, the capacities of FSO communication were analyzed \cite{BIB05,BIB06,BIB07}.
Over the more widely used Gaussian channels, the capacity bounds of FSO communication were derived \cite{BIB08,BIB09,BIB10}.
Based on the sphere-packing and the maximum entropy distribution,
 the capacity upper and lower bounds were obtained \cite{BIB11}.
In \cite{BIB12}, the capacities over the noncoherent and partially coherent Gaussian channels were derived.
In \cite{BIB13}, the capacity was evaluated for FSO communication using optical orthogonal frequency division multiplexing (OFDM).
In \cite{BIB14}, a novel upper bound of the capacity was obtained by using the method of trigonometric moment space.

Although many works have investigated the capacity of FSO communication, the analytical expressions derived for FSO channels cannot be applied to VLC directly.
Recently, the capacity analysis of VLC has drawn much attention.
In \cite{BIB16}, the capacity for VLC was analyzed, but capacity expression was not derived.
In \cite{BIB17}, tight capacity bounds were obtained for indoor VLC having non-negativity and average optical intensity constraint.
With an additional peak optical intensity constraint,
the capacity of VLC was further investigated \cite{BIB18}.
Refs. \cite{BIB19} and \cite{BIB19_1} investigated the capacities for indoor VLC using pulse amplitude modulation and OFDM. In most works, the Gaussian noises \cite{BIB16,BIB17,BIB18,BIB19,BIB19_1} are assumed to be independent of the signal.
By considering the signal-dependent noise for FSO communication,
the channel capacity \cite{BIB22}, secrecy capacity \cite{add00}, and pre-coding scheme \cite{add01} were obtained.
For VLC with signal-dependent noise, efficient modulation schemes \cite{add1} and transceiver design methods \cite{add2} were studied.
In our preliminary study \cite{BIB21_1}, the capacity bounds for VLC with signal-dependent noise was derived.
However, the asymptotic analysis and the impacts of peak optical intensity on capacity performance were not provided.

\subsection{Contributions and Organization}
In this paper, we consider a point-to-point indoor VLC system with signal-dependent Gaussian noise and analyze the channel capacity of such a system.
The main contributions of this paper are summarized as follows:
\begin{itemize}
  \item Under the average and peak optical intensity constraints, we investigate the capacity bounds for indoor VLC. A capacity lower bound is obtained based on the property that the output entropy is larger than the input entropy. An optimal input probability density function (PDF) is derived by a variational method.
      By using the dual expression of capacity and the principle of ``capacity-achieving source distributions that escape to infinity", a capacity upper bound is obtained. Numerical results verify the accuracy of the derived bounds.
  \item By removing the peak optical intensity constraint, we further analyze the capacity bounds for indoor VLC. For this scenario, a novel input PDF is derived by solving an optimization problem. Moreover, closed-form capacity lower and upper bounds in this case are also derived. The accuracy of the derived bounds is verified.  The capacity bounds for VLC having different constraints are also compared.
  \item The tightness of the derived capacity bounds is examined, and the asymptotic analysis is also provided. For VLC having different constraints, the performance gap and the asymptotic performance gap between the capacity upper and lower bounds are derived. At hight optical intensity, we show that the gap between the asymptotic upper and lower bounds approaches zero. This observation indicates that the derived capacity upper and lower bounds are asymptotically tight.
\end{itemize}

The remainder of the paper is organized as follows.
Section \ref{section2} introduces the system model.
Section \ref{section3} analyzes channel capacity bounds for VLC having both average and peak optical intensity constraints.
Section \ref{section4} derives channel capacity bounds for VLC having only average optical intensity constraint.
Numerical results are presented in Section \ref{section5} and conclusions are drawn in Section \ref{section6}.

\emph{Notations:} In this paper, ${\cal N} (\mu,\sigma^2)$ denotes a Gaussian distribution having mean $\mu$ and variance $\sigma^2$;
$f_X(\cdot)$ and $f_{Y|X}(\cdot)$ are, respectively, the PDF of $X$ and the conditional PDF of $Y$ given $X$;
$E(\cdot)$ is the expectation operator;
$I(\cdot;\cdot)$ is the mutual information;
$H(\cdot)$ and $H( \left. \cdot \right|\cdot )$ are the entropy and conditional entropy;
${\rm{erf}}(\cdot) $ is the error function;
$D\left( {\left.  \cdot  \right\| \cdot } \right)$ is the relative entropy;
${\cal Q}(\cdot) $ is the Gaussian ${\cal Q}$-function;
${o_X}(1)$ is an infinitely small quantity when $X$ approaches infinity.

\section{System Model}
\label{section2}
As depicted in Fig. \ref{fig1}, we consider a VLC system consisting of a transmitter and a receiver.
At the transmitter, the light source is an LED,
which performs the electrical-to-optical conversion.
The optical signal is propagated via the VLC channel.
At the receiver, a photodiode is utilized to convert the optical signal to an electrical signal.
In the system, thermal noise, shot noise and amplifier noise at the receiver are considered \cite{BIB23}.
Specifically, both thermal noise and amplifier noise are signal-independent and are assumed to follow Gaussian distributions \cite{BIB10}.
The shot noise can also be modeled as a Gaussian distribution, but its strength relies on the input signal \cite{BIB23}.
Consequently, the received signal $Y$ at the receiver side is modeled as\footnote{Following \cite{BIB10} and \cite{BIB22}, we consider a discrete-time channel. All entropies in the ensuing derivation are presented in nats, as natural logarithm is employed. Therefore, the channel capacity in this paper is presented in nats/channel use \cite{BIB25}.}
\begin{equation}
Y = X + \sqrt {X} {Z_1} + {Z_0},
\label{equ1}
\end{equation}
where $X$ is the optical intensity;
${Z_0} \sim {\cal N}(0,{\sigma ^2})$ is the signal-independent additive white Gaussian noise (AWGN);
$\sqrt{X} Z_1$ is the signal-dependent AWGN where $Z_1 \sim {\cal N} (0, {\varsigma ^2} {\sigma ^2})$, and the term ${\varsigma ^2} > 0$ denotes a scaling factor typically ranging from 0 to 10 \cite{add1};
Both noise terms ${Z_0}$ and $\sqrt{X}{Z_1}$ are assumed to be independent.
Note that, in (\ref{equ1}), the losses and optoelectronic conversion factor in VLC are constants and scale the SNR only. Appendix \ref{Appendix_A} justifies the model in (\ref{equ1}).

\begin{figure}
\centering
\includegraphics[width=10cm]{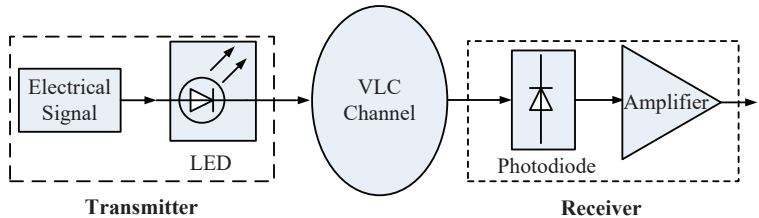}
\caption{A typical indoor VLC system}
\label{fig1}
\end{figure}

In indoor VLC, the following three constraints on $X$ should be considered:
\begin{description}
  \item[1)] \emph{Non-negativity}: For indoor VLC, the optical intensity is typically employed to carry information \cite{BIB24}, and thus the transmitted signal $X$ must be non-negative, i.e.,
\begin{eqnarray}
X \ge 0. \label{equ3}
\end{eqnarray}
  \item[2)] \emph{Peak optical intensity constraint}: For practical indoor VLC systems, the peak optical intensity is constrained by the luminous capability of the LED \cite{BIB24}. This means that $X$ cannot exceed the LED's peak optical intensity $A$, i.e.,
\begin{eqnarray}
X \le A. \label{equ4}
\end{eqnarray}
  \item[3)] \emph{Average optical intensity constraint}: According to the illumination requirement in IEEE 802.15.7,
the average optical intensity in indoor environments does not change with time,
but it can be adjusted according to the dimming requirement \cite{BIB18}, i.e.,
\begin{eqnarray}
 E_X(X) = \int_0^A xf_X(x) {\rm d}x  = \xi P, \label{equ5}
\end{eqnarray}
where $P$ is the LED's nominal optical intensity, and $\xi \in [0,1]$ is the dimming target.
\end{description}

Define an average-to-peak-optical-intensity ratio (APR) as
\begin{eqnarray}
  \alpha \buildrel \Delta \over = \xi P/A. \label{add1}
\end{eqnarray}
In this paper, we consider two indoor VLC scenarios. For the scenario having both average and peak optical intensity constraints, we have $0<\alpha \leq 1$. For the scenario having only an average optical intensity constraint, we let $\alpha \to 0$.

For the channel model in (\ref{equ1}) with constraints (\ref{equ3})-(\ref{equ5}),
it is challenging to find an exact capacity expression.
Alternatively, tight bounds on capacity will be investigated.

\section{Channel Capacity With Both Average and Peak Optical Intensity Constraints}
\label{section3}
In this section, by considering constraints (\ref{equ3})-(\ref{equ5}),
we derive the lower and upper bounds of the channel capacity for VLC.
Then, the tightness of the derived bounds is investigated.

\subsection{Capacity Lower Bound}
\label{section3_1}
According to information theory, a capacity lower bound can be derived by calculating the mutual information under an arbitrary input PDF.
Consequently, the capacity of VLC can be lower-bounded by
\begin{eqnarray}
 C   &\ge& {\left. {I\left( {X;Y} \right)} \right|_{{\rm{any}}\;f_X(x)\;{\rm{satisfing}}\;{(\ref{equ3}),}\;{(\ref{equ4})}\;{\rm{and}}\;{(\ref{equ5})}}}\nonumber  \\
  &=& H(Y)-H\left( {\left. Y \right|X} \right).
\label{equ9}
\end{eqnarray}

According to (\ref{equ1}), the conditional PDF $f_{Y|X}(y|x)$ is given by
\begin{equation}
{f_{Y\left| X \right.}}\left( {y\left| x \right.} \right) = \frac{1}{{\sqrt {2\pi \left( {1 + x{\varsigma ^2}} \right){\sigma ^2}} }}{e^{ - \frac{{{{\left( {y - x} \right)}^2}}}{{2\left( {1 + x{\varsigma ^2}} \right){\sigma ^2}}}}},
\label{equ2}
\end{equation}
and thus the conditional entropy $H(Y|X)$ in (\ref{equ9}) is expressed as
\begin{eqnarray}
H\left( {Y\left| X \right.} \right)  = \frac{1}{2}\ln \left( {2\pi e{\sigma ^2}} \right) + \frac{1}{2}{E_X} ( {\ln ( 1 + X{\varsigma ^2} )} ).
\label{equ10}
\end{eqnarray}

According to \emph{Proposition 11} in \cite{BIB22},
it is known that the output entropy $H(Y)$ is invariably larger than the input entropy $H(X)$, i.e.,
\begin{equation}
H(Y) \ge H(X) + {f_{{\rm{low}}}}\left( {\xi P} \right),
\label{equ11}
\end{equation}
where ${f_{{\rm{low}}}}\left( {\xi P} \right)$ is positive and defined as
\begin{equation}
{f_{{\rm{low}}}}\left( {\xi P} \right) = \frac{1}{2}\ln \left( {1 + \frac{{2{\varsigma ^2}{\sigma ^2}}}{{\xi P}}} \right) - \frac{{\xi P + {\varsigma ^2}{\sigma ^2}}}{{{\varsigma ^2}{\sigma ^2}}} + \frac{{\sqrt {\xi P\left( {\xi P + 2{\varsigma ^2}{\sigma ^2}} \right)} }}{{{\varsigma ^2}{\sigma ^2}}}.
\label{equ12}
\end{equation}

Substituting (\ref{equ10}) and (\ref{equ11}) into (\ref{equ9}), we have
\begin{eqnarray}
C \!\!\!\!&\ge&\!\!\!\! -{\cal J}\left[ f_X(x) \right]+ {f_{{\rm{low}}}}\left( {\xi P} \right)- \frac{1}{2}\ln \left( {2\pi e{\sigma ^2}} \right),
\label{equ13}
\end{eqnarray}
where ${\cal J}\left[ f_X(x) \right]$ is given by
\begin{equation}
{\cal J}\left[ f_X(x) \right] = {  \int_0^A {f_X(x)} \ln ( {f_X(x)} ) {\rm{d}}x +\frac{1}{2}\int_0^A {\ln \left( {1 + {\varsigma ^2}x} \right)f_X(x)} {\rm{d}}x}.
\label{equ:23}
\end{equation}
In (\ref{equ13}), a lower bound can be obtained by choosing an arbitrary $f_X(x)$ satisfying (\ref{equ3}), (\ref{equ4}) and (\ref{equ5}).
To obtain a tight lower bound on capacity, we must select a suitable $f_X(x)$.
Here, we consider the following optimization problem
\begin{eqnarray}
  \begin{split}
    & \min_{f_X(x)} {\cal J}\left[ f_X(x) \right] \\
    \text{s.t.}\qquad & \int_0^A {f_X(x){\rm{d}}x}  = 1  \\
    & \int_0^A xf_X(x){\rm{d}}x  = \xi P.
  \end{split}
  \label{equ14}
\end{eqnarray}
Note that problem (\ref{equ14}) can be solved via the classic variational method \cite{BIB25_1}.
Then, we can obtain the following theorem.

\begin{theorem}
The optimal PDF $f_X(x)$ for the optimization problem (\ref{equ14}) is given by\footnote{For FSO channels under average and peak optical intensity constraints, the capacity-achieving input PDF is shown to be discrete with a finite number of mass points \cite{BIB08,BIB09}. However, the explicit PDF expression is not available. No clear insights can be drawn for system design.
Here, when analyzing the capacity lower bound of a VLC channel, we provide a continuous and explicit expression for the input PDF. Based on the tightness analysis in Section \ref{section3_3}, we will show that the derived input PDF in \emph{Theorem \ref{the1}} is capacity-approaching.
}
\begin{eqnarray}
{f_X}(x) = \left\{ \begin{array}{l}
\frac{{{\varsigma ^2}}}{{2\left( {\sqrt {1 + {\varsigma ^2}A}  - 1} \right)\sqrt {1 + {\varsigma ^2}x} }},\;x \in [0,A],\;{\rm{if}}\;\alpha  = \frac{{{\varsigma ^2}A + \sqrt {1 + {\varsigma ^2}A}  - 1}}{{3{\varsigma ^2}A}}\\
\frac{{{\varsigma ^2}{e^{bx}}}}{{2g(b,{\varsigma ^2},A)\sqrt {1 + {\varsigma ^2}x} }},\;x \in [0,A],\;\;\;\;\;\;\;{\rm{if}}\;\alpha  \ne \frac{{{\varsigma ^2}A + \sqrt {1 + {\varsigma ^2}A}  - 1}}{{3{\varsigma ^2}A}}\;{\rm{and}}\;\alpha  \in (0,P/A]
\end{array} \right.,
\label{b1}
\end{eqnarray}
where $\alpha$ is given by (\ref{add1}), $g\left( {b,{\varsigma ^2},A} \right)$ is defined as
\begin{eqnarray}
g\left( {b,{\varsigma ^2},A} \right) \buildrel \Delta \over = \left\{ {\begin{array}{*{20}{c}}
{\frac{{{e^{ - \frac{b}{{{\varsigma ^2}}}}}\varsigma \sqrt \pi  }}{{2\sqrt { - b} }}\left[ {{\rm{erf}}\left( {\sqrt {-\frac{{  b\left( {1 + {\varsigma ^2}A} \right)}}{{{\varsigma ^2}}}} } \right) - {\rm{erf}}\left({\sqrt {- \frac{{ b}}{{{\varsigma ^2}}}} } \right)} \right]  ,b < 0}\\
{\int_1^{\sqrt {1 + {\varsigma ^2}A} } {{e^{\frac{{b\left( {{t^2} - 1} \right)}}{{{\varsigma ^2}}}}}{\rm{d}}t},\;\;\;\;\;\;\;\;\;\;\;\;\;\;\;\;\;\;\;\;\;\;\;\;\;\;\;\;\;\;\;\;\;\;\;\;\;\;\;\;b > 0}
\end{array}} \right.,
\label{equ19}
\end{eqnarray}
and $b$ is the solution of the following transcedental equation
\begin{eqnarray}
\xi P = \frac{{\sqrt {1 + {\varsigma ^2}A} {e^{bA}} - 1}}{{2bg\left( {b,{\varsigma ^2},A} \right)}} - \frac{1}{{2b}} - \frac{1}{{{\varsigma ^2}}}.
\label{equ20}
\end{eqnarray}
\label{the1}
\end{theorem}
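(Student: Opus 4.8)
The plan is to treat (\ref{equ14}) as a constrained calculus-of-variations problem and solve it through Lagrange multipliers. Because $f_X\ln f_X$ is convex in $f_X$ while the term $\frac{1}{2}\ln(1+\varsigma^2 x)f_X$ and both constraints are linear in $f_X$, the objective $\mathcal{J}$ is strictly convex on the feasible set; hence a stationary point of the Lagrangian is automatically the unique global minimizer, and it suffices to locate that stationary point and then fix the constants from the two constraints. I would form
\begin{equation}
L[f_X]=\int_0^A\Big[f_X\ln f_X+\tfrac{1}{2}\ln(1+\varsigma^2 x)f_X+\lambda_1 f_X+\lambda_2 x f_X\Big]{\rm d}x,
\end{equation}
and set the pointwise derivative of the integrand with respect to $f_X$ to zero. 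This Euler--Lagrange condition reads $\ln f_X+1+\frac{1}{2}\ln(1+\varsigma^2 x)+\lambda_1+\lambda_2 x=0$, so that, writing $b=-\lambda_2$ and absorbing the remaining constants into $K=e^{-1-\lambda_1}$,
\begin{equation}
f_X(x)=\frac{K e^{bx}}{\sqrt{1+\varsigma^2 x}}.
\end{equation}
This already displays the $1/\sqrt{1+\varsigma^2 x}$ factor and exponential tilt appearing in (\ref{b1}).

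Next I would impose the constraints to pin down $K$ and $b$. For normalization, the substitution $t=\sqrt{1+\varsigma^2 x}$ (so $x=(t^2-1)/\varsigma^2$ and ${\rm d}x=2t/\varsigma^2\,{\rm d}t$) turns $\int_0^A f_X\,{\rm d}x=1$ into $\frac{2K}{\varsigma^2}\int_1^{\sqrt{1+\varsigma^2 A}}e^{b(t^2-1)/\varsigma^2}{\rm d}t=1$. The remaining integral is precisely $g(b,\varsigma^2,A)$ for $b>0$; for $b<0$ it is a Gaussian integral whose value, after completing the square, is the error-function expression in (\ref{equ19}). In either case $K=\varsigma^2/[2g(b,\varsigma^2,A)]$, which recovers the prefactor in (\ref{b1}).

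The average constraint then produces the transcendental equation (\ref{equ20}), and this is the step I expect to be the main obstacle, since $\int_0^A x f_X\,{\rm d}x$ has no elementary antiderivative on its own. The device I would use is to differentiate the product $e^{bx}\sqrt{1+\varsigma^2 x}$, obtaining $e^{bx}\,[2b(1+\varsigma^2 x)+\varsigma^2]/[2\sqrt{1+\varsigma^2 x}]$; integrating this identity over $[0,A]$ expresses the first moment in terms of the zeroth moment (already fixed by normalization) and the boundary value $e^{bA}\sqrt{1+\varsigma^2 A}-1$. Solving the resulting linear relation for the first moment and equating it to $\xi P$ yields (\ref{equ20}) after routine algebra.

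Finally I would handle the degenerate case $b=0$, which is the untilted density of the first branch in (\ref{b1}): here $g(0,\varsigma^2,A)=\sqrt{1+\varsigma^2 A}-1$, giving that branch directly. To identify the $\alpha$ at which it applies, I would let $b\to0$ in (\ref{equ20}); the combination $\frac{\sqrt{1+\varsigma^2 A}\,e^{bA}-1}{2bg}-\frac{1}{2b}$ is an indeterminate $0/0$ form, and one application of L'Hopital's rule (using $g'(0)=\frac{1}{\varsigma^2}[(T^3-1)/3-(T-1)]$ with $T=\sqrt{1+\varsigma^2 A}$) reduces it to $(T^2+T+1)/(3\varsigma^2)$. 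Subtracting $1/\varsigma^2$ and dividing by $A$ gives $\alpha=(\varsigma^2 A+\sqrt{1+\varsigma^2 A}-1)/(3\varsigma^2 A)$, exactly the threshold separating the two branches in \emph{Theorem \ref{the1}}. Monotonicity of the right-hand side of (\ref{equ20}) in $b$ then guarantees a unique $b$ for every admissible $\alpha\in(0,P/A]$, completing the identification.
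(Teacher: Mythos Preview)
Your proposal is correct and follows essentially the same route as the paper: both derive the Euler--Lagrange condition $\ln f_X(x)+1+\tfrac{1}{2}\ln(1+\varsigma^2 x)=c+bx$ via a variational argument (the paper uses an explicit perturbation $f_X+\varepsilon\eta$, you use Lagrange multipliers directly), then split into the cases $b=0$ and $b\neq0$ and fix the constants from the two constraints. Your treatment is slightly more complete than the paper's, since you supply the convexity argument that the stationary point is a global minimizer and a monotonicity argument for uniqueness of $b$, and you recover the $b=0$ threshold by taking $b\to0$ in (\ref{equ20}) via L'H\^opital rather than by direct evaluation of $\int_0^A x f_X\,{\rm d}x$ as the paper does---but these are refinements of the same underlying argument, not a different approach.
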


\begin{proof}
See Appendix \ref{Appendix_B}.
\end{proof}

According to (\ref{equ13}) and \emph{Theorem \ref{the1}}, the following theorem is obtained.

\begin{theorem}
For indoor VLC with constraints (\ref{equ3}), (\ref{equ4}) and (\ref{equ5}), a lower bound of the channel capacity is derived as\footnote{The lower bound in \emph{Theorem \ref{the2}} is valid for all values of optical intensities.}
\begin{eqnarray}
C_{\rm low} = \left\{ \begin{array}{l}
\ln \left( {\frac{2\left( {\sqrt {1 + {\varsigma ^2}A}  - 1} \right)}{\varsigma ^2 \sqrt{2 \pi e {\sigma^2}}}} \right) + {f_{{\rm{low}}}}\left( {\xi P} \right),\;{\rm{if}}\;\alpha  = \frac{{{\varsigma ^2}A + \sqrt {1 + {\varsigma ^2}A}  - 1}}{{3{\varsigma ^2}A}}\\
\!\ln\! \left( {\frac{{2g\left( {b,{\varsigma ^2},A} \right)}}{\varsigma ^2 \sqrt{2 \pi e {\sigma^2}}}} \right) - b\xi P+ {f_{{\rm{low}}}}\left( {\xi P} \right),\;{\rm{if}}\;\alpha  \ne \frac{{{\varsigma ^2}A + \sqrt {1 + {\varsigma ^2}A}  - 1}}{{3{\varsigma ^2}A}}\;{\rm{and}}\;\alpha  \in (0,P/A]
\end{array} \right.,
\label{a1}
\end{eqnarray}
where $f_{\rm low}(\xi P)$ and $g\left( {b,{\varsigma ^2},A} \right)$ are defined in (\ref{equ12}) and (\ref{equ19}), respectively.
\label{the2}
\end{theorem}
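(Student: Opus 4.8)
The plan is to substitute the optimal input PDF from \emph{Theorem \ref{the1}} directly into the lower bound (\ref{equ13}), namely $C \ge -{\cal J}[f_X(x)] + f_{\rm low}(\xi P) - \frac{1}{2}\ln(2\pi e \sigma^2)$, and to simplify the functional ${\cal J}[f_X(x)]$ defined in (\ref{equ:23}). The crucial structural feature of the optimal PDF is that its logarithm contains a term $-\frac{1}{2}\ln(1+\varsigma^2 x)$, engineered precisely to cancel the second integral in (\ref{equ:23}). First I would take the logarithm of the optimal $f_X(x)$ in each of the two cases of (\ref{b1}) and expand it into a constant part, a linear part (present only in the second case, through $bx$), and the $-\frac{1}{2}\ln(1+\varsigma^2 x)$ part.

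For the first case, where $\alpha = \frac{\varsigma^2 A + \sqrt{1+\varsigma^2 A}-1}{3\varsigma^2 A}$, we have $\ln f_X(x) = \ln\frac{\varsigma^2}{2(\sqrt{1+\varsigma^2 A}-1)} - \frac{1}{2}\ln(1+\varsigma^2 x)$. Inserting this into $\int_0^A f_X\ln f_X\,{\rm d}x$ and invoking the normalization constraint $\int_0^A f_X\,{\rm d}x = 1$, the constant term integrates to $\ln\frac{\varsigma^2}{2(\sqrt{1+\varsigma^2 A}-1)}$, while the remaining $-\frac{1}{2}\int_0^A f_X\ln(1+\varsigma^2 x)\,{\rm d}x$ cancels exactly against the second term of ${\cal J}$. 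Hence ${\cal J} = \ln\frac{\varsigma^2}{2(\sqrt{1+\varsigma^2 A}-1)}$, and substituting $-{\cal J}$ into (\ref{equ13}) and merging the logarithms with $-\frac{1}{2}\ln(2\pi e\sigma^2)$ yields the first line of (\ref{a1}).

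For the second case, $\ln f_X(x) = \ln\frac{\varsigma^2}{2g(b,\varsigma^2,A)} + bx - \frac{1}{2}\ln(1+\varsigma^2 x)$. The same cancellation removes the logarithmic integral, the constant term again gives $\ln\frac{\varsigma^2}{2g(b,\varsigma^2,A)}$ via normalization, and now the linear term contributes $b\int_0^A x f_X\,{\rm d}x = b\,\xi P$ by the average-intensity constraint (\ref{equ5}). This gives ${\cal J} = \ln\frac{\varsigma^2}{2g(b,\varsigma^2,A)} + b\,\xi P$; substituting into (\ref{equ13}) and again combining the constant logarithms produces the second line of (\ref{a1}).

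I do not expect a genuine obstacle here: once \emph{Theorem \ref{the1}} is in hand the result follows by direct substitution, and no new integral needs to be evaluated, because the two integral constraints of problem (\ref{equ14}) together with the deliberate cancellation of the $\ln(1+\varsigma^2 x)$ terms close out every integral in ${\cal J}$. The only point requiring care is bookkeeping: verifying that the normalization and mean constraints are consumed in the right places, and that the constant prefactors are merged correctly under the $-\frac{1}{2}\ln(2\pi e\sigma^2)$ term to reach the compact forms displayed in (\ref{a1}).
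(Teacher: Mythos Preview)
Your proposal is correct and follows essentially the same route as the paper: the paper computes $H(X)$ for the optimal density in each case (obtaining $H(X)=\ln\!\big(2(\sqrt{1+\varsigma^2 A}-1)/\varsigma^2\big)+\tfrac12 E_X[\ln(1+\varsigma^2 X)]$ and the analogous expression with $g(b,\varsigma^2,A)$ and $-b\xi P$), then substitutes into (\ref{equ13}), which is equivalent to your direct evaluation of ${\cal J}[f_X]$ with the same cancellation of the $\tfrac12\ln(1+\varsigma^2 x)$ term and the same use of the normalization and mean constraints.
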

\begin{proof}
See Appendix \ref{Appendix_C}.
\end{proof}

\subsection{Capacity Upper Bound}
\label{section3_2}
In this subsection, the upper bound of capacity is derived by using the dual expression of the capacity.
According to \cite{BIB26,BIB27,BIB28}, the following identity holds
\begin{eqnarray}
E_X( D(f_{Y|X}(y|X)\|R_Y(y)) )= I(X;Y) + D(f_{Y}(y) \| R_Y(y)),
\label{equ23}
\end{eqnarray}
where $R_Y(y)$ stands for an arbitrary PDF on $Y$. According to \emph{Theorem 2.6.3} in \cite{BIB25}, we know $D( f_Y(y)\| R_Y(y)) \ge 0$. Therefore, eq. (\ref{equ23}) can be reduced to
\begin{eqnarray}
 I\left( {X;Y} \right) \le E_X( D(f_{Y|X}(y|X)\|R_Y(y)) ).
\label{equ25}
\end{eqnarray}
Then, the capacity of VLC can be upper-bounded by
\begin{eqnarray}
 C \le  {E_{{X:f_X(x)=f^*_X(x)}}}( D(f_{Y|X}(y|X)\|R_Y(y)) ),
\label{equ26}
\end{eqnarray}
where $f^*_X(x)$ denotes the channel capacity-achieving input PDF.
Moreover, the relative entropy $D(f_{Y|X}(y|X)\|R_Y(y))$ can be further expressed as
\begin{eqnarray}
D\left( {{f_{Y|X}}\left( {y\left| X \right.} \right)\left\| {{R_Y}\left( y \right)} \right.} \right)
 =  - \frac{1}{2}\ln \left( {2\pi e{\sigma ^2}\left( {1 + {\varsigma ^2}X} \right)} \right)\!\! -\!\! \int_{ - \infty }^{ \infty } {{f_{Y|X}}\left( {y\left| X \right.} \right)\ln \left( {{R_Y}\left( y \right)} \right)} \,{\rm{d}}y.
\label{equ27}
\end{eqnarray}

In (\ref{equ26}), an upper bound of capacity can be obtained by selecting an arbitrary $R_Y(y)$.
However, a suitable choice of $R_Y(y)$ is required to obtain a tight bound.
Here, when $\alpha  = ({\varsigma ^2}A + \sqrt {1 + {\varsigma ^2}A}  - 1 ) / ( 3{\varsigma ^2}A )$, $R_Y(y)$ can be chosen as
\begin{eqnarray}
{R_Y}\left( y \right) = \left\{ \begin{array}{c}
\frac{{2\beta }}{{\sqrt {2\pi } }}{e^{ - \frac{{{y^2}}}{2}}}, \quad \quad \quad \quad \quad \;\;y \in \left( { - \infty ,0} \right)\\
\frac{{\left( {1 - 2\beta } \right){\varsigma ^2}}}{{2\left( {\sqrt {1 + {\varsigma ^2}\left( {A + A\delta } \right)}  - 1} \right)\sqrt {1 + {\varsigma ^2}y} }},  \;y \in \left[ {0,A + A\delta } \right]\\
\frac{\beta }{{{e^{ - \left( {A + A\delta } \right)}}}}{e^{ - y}}, \;\;\;\;\;\quad\quad\quad \quad \quad y \in \left( {A + A\delta ,\infty } \right)
\end{array} \right..
\label{equ28}
\end{eqnarray}
When $\alpha  \ne ({\varsigma ^2}A + \sqrt {1 + {\varsigma ^2}A}  - 1 ) / ( 3{\varsigma ^2}A )$ and $\alpha  \in ( 0, P / A] $, $R_Y(y)$ is selected as
\begin{eqnarray}
{R_Y}\left( y \right) = \left\{ {\begin{array}{*{20}{c}}
{\frac{{2\beta }}{{\sqrt {2\pi } }}{e^{ - \frac{{{y^2}}}{2}}}, \quad  \quad\; y \in \left( { - \infty ,0} \right)}\\
\begin{array}{l}
\frac{{\left( {1 - 2\beta } \right){\varsigma ^2}{e^{by}}}}{{2G\left( {b,{\varsigma ^2},A,\delta } \right)\sqrt {1 + {\varsigma ^2}y} }}, \quad  y \in \left[ {0,A + A\delta } \right]\\
\frac{\beta }{{{e^{ - \left( {A + A\delta } \right)}}}}{e^{ - y}},  \quad \quad \quad \;y \in \left( {A + A\delta ,\infty } \right)
\end{array}
\end{array}} \right.,
\label{equ29}
\end{eqnarray}
where $\beta \in (0,1) $ and $\delta >0$ are small positive constants, and $G\left( {b,{\varsigma ^2},A,\delta } \right)$ is defined as
\begin{equation}
G\left( {b,{\varsigma ^2},A,\delta } \right) \triangleq \int_1^{\sqrt {1 + {\varsigma ^2}A\left( {1 + \delta } \right)} } {{e^{b\frac{{{t^2} - 1}}{{{\varsigma ^2}}}}}{\rm{d}}t}. \label{add2}
\end{equation}

Using (\ref{equ28}), (\ref{equ29}), and the concept of ``capacity-achieving source distributions that escape to infinity" \cite{BIB22,BIB28,BIB29}, we derive the upper bound on capacity in the following theorem.

\begin{theorem}
For indoor VLC having constraints (\ref{equ3}), (\ref{equ4}) and (\ref{equ5}), an upper bound on the capacity is obtained as\footnote{Because the principle of ``capacity-achieving source distributions that escape to infinity" is employed, the upper bound in \emph{Theorem \ref{the3}} is an asymptotic one. This means that the upper bound is valid only asymptotically, i.e., only when the optical intensity is sufficiently large.}
\begin{eqnarray}
C_{\rm upp} = \left\{ \begin{array}{l}
\ln \left( {\frac{{2(\sqrt {1 + {\varsigma ^2}A(1+\delta)}  - 1)}}{{(1 - 2\beta ){\varsigma ^2}} \sqrt{2 \pi e {\sigma^2}}}} \right)  + {o_A}(1),\;{\rm{if}}\;\alpha  = \frac{{{\varsigma ^2}A + \sqrt {1 + {\varsigma ^2}A}  - 1}}{{3{\varsigma ^2}A}}\\
\ln \left( {\frac{{2G(b,{\varsigma ^2},A,\delta )}}{{(1 - 2\beta ){\varsigma ^2}} \sqrt{2 \pi e {\sigma^2}}}} \right) + \psi \left( {b,{\varsigma ^2},A,{\sigma ^2},\xi ,P} \right) + {o_A}(1),\\
\;\;\;\;\;\;\;\;\;\;\;\;\;\;\;\;\;\;\qquad\qquad\qquad\qquad {\rm{if}}\;\alpha  \ne \frac{{{\varsigma ^2}A + \sqrt {1 + {\varsigma ^2}A}  - 1}}{{3{\varsigma ^2}A}}\;{\rm{and}}\;\alpha  \in (0,P/A]
\end{array} \right.,
\label{a2}
\end{eqnarray}
where $\beta \in (0,1) $, $\delta >0$, $G\left( {b,{\varsigma ^2},A,\delta } \right)$ is given by (\ref{add2}), and $\psi \left( {b,{\varsigma ^2},A,{\sigma ^2},\xi ,P} \right)$ is defined as
\begin{eqnarray}
\psi \!( {b,{\varsigma ^2}\!,A,{\sigma ^2}\!,\xi\! ,\!P} ) \!\!\buildrel \Delta \over = \!\!\left\{\!\!\! {\begin{array}{*{20}{c}}
{ - b\frac{{\sqrt {\left( {1 + A{\varsigma ^2}} \right){\sigma ^2}} }}{{\sqrt {2\pi } }}{e^{ - \frac{{{A^2}}}{{2\left( {1 + A{\varsigma ^2}} \right){\sigma ^2}}}}} - b\xi P,\;\;\;\qquad\qquad\qquad\qquad\qquad b < 0}\\
{\!b\frac{{\sqrt {\left( {1 \!+\! A{\varsigma ^2}} \right){\sigma ^2}} }}{{\sqrt {2\pi } }}{e^{ - \frac{{{{\left( {A\delta } \right)}^2}}}{{2\left( {1 \!+\! A{\varsigma ^2}} \right){\sigma ^2}}}}}\!\! -\!\! b\xi P\!\!\!\left[ {{\cal Q}\!\!\!\left( \!\!{\frac{{ - \xi P}}{{\sqrt {\left( {1 \!+\! \xi P{\varsigma ^2}} \right){\sigma ^2}} }}} \!\!\right)\!\! -\!\! {\cal Q}\!\!\!\left( \!\!{\frac{{A + A\delta  - \xi P}}{{\sqrt {\left( {1 \!+\! \xi P{\varsigma ^2}} \right){\sigma ^2}} }}} \right)}\!\! \right]\!,\! b > 0}
\end{array}} \right..
\label{equ33}
\end{eqnarray}
\label{the3}
\end{theorem}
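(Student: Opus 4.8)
The plan is to begin from the dual capacity bound (\ref{equ26}) together with the relative-entropy decomposition (\ref{equ27}), and to evaluate the right-hand side for the specific reference output PDF $R_Y(y)$ specified in (\ref{equ28}) and (\ref{equ29}). Because $R_Y(y)$ is piecewise defined over the three regions $(-\infty,0)$, $[0,A+A\delta]$ and $(A+A\delta,\infty)$, I would, for a fixed conditioning value $X=x$, split the integral $\int_{-\infty}^{\infty}f_{Y|X}(y|x)\ln(R_Y(y))\,{\rm d}y$ in (\ref{equ27}) into three matching pieces and treat each separately. The first term of (\ref{equ27}) contributes, after taking the expectation over $X$, the two quantities $-\tfrac{1}{2}\ln(2\pi e\sigma^2)$ and $-\tfrac{1}{2}E_X(\ln(1+\varsigma^2 X))$, which I would carry along.

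For the middle region I would substitute the logarithm of the central piece of $R_Y(y)$, which separates into a constant (the logarithm of the normalizing factor), the term $-\tfrac{1}{2}\ln(1+\varsigma^2 y)$, and, only in the case $\alpha\ne(\varsigma^2 A+\sqrt{1+\varsigma^2 A}-1)/(3\varsigma^2 A)$, a linear term $by$. The crucial step is the cancellation of the log-variance terms: integrating $-(-\tfrac{1}{2}\ln(1+\varsigma^2 y))=\tfrac{1}{2}\ln(1+\varsigma^2 y)$ against $f_{Y|X}(y|x)$ gives approximately $\tfrac{1}{2}\ln(1+\varsigma^2 x)$, so that after the expectation over $X$ it nearly offsets the $-\tfrac{1}{2}E_X(\ln(1+\varsigma^2 X))$ carried from (\ref{equ27}), the residual being $o_A(1)$. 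The constant term, combined with $-\tfrac{1}{2}\ln(2\pi e\sigma^2)$, reproduces the leading logarithm in (\ref{a2}); the linear term, after the expectation and after accounting for the truncation of the conditional density to $[0,A+A\delta]$, yields the correction $\psi(\cdot)$ of (\ref{equ33}), which itself contains the $-b\xi P$ and the edge ${\cal Q}$-function and exponential contributions.

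The two outer regions I would dispatch with the principle of capacity-achieving source distributions that escape to infinity: as $A\to\infty$ the optimal input mass moves away from the endpoints $0$ and $A$, so the probability that $Y$ falls into $(-\infty,0)$ or $(A+A\delta,\infty)$ becomes vanishingly small, and the Gaussian-tail and exponential-tail contributions of $R_Y(y)$—each weighted by the small constant $\beta$—are therefore $o_A(1)$. Assembling the leading logarithm, the cancelled log-variance terms, and the $\psi$ correction, and substituting into (\ref{equ26}), would deliver the two expressions in (\ref{a2}).

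The principal obstacle I anticipate is the asymptotic bookkeeping. The cleanest way to see that the log-variance mismatch is negligible is to expand $\ln(1+\varsigma^2 y)$ about $y=x$: since the conditional variance is $(1+\varsigma^2 x)\sigma^2$, the second-order correction is of order $\varsigma^4\sigma^2/(1+\varsigma^2 x)$, which indeed vanishes as $x\to\infty$—this is precisely where the escape-to-infinity hypothesis is used, and making it uniform over the unknown capacity-achieving law $f^*_X$ is the delicate point. The remaining labour is the evaluation of the truncated Gaussian moment integrals of $y$ over $[0,A+A\delta]$, which is what generates the exact ${\cal Q}$-function and exponential edge terms in $\psi$ of (\ref{equ33}); I would lean on the arguments in \cite{BIB22,BIB28,BIB29} to justify the escape-to-infinity step rigorously.
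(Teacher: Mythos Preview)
Your proposal is correct and follows essentially the same route as the paper's Appendix~\ref{Appendix_D}: the same three-region split of $\int f_{Y|X}\ln R_Y$, the same cancellation of $\tfrac12\ln(1+\varsigma^2 y)$ against the conditional-entropy term (the paper uses the elementary bound $\ln(1+u)\le u$ after a change of variable instead of your Taylor expansion, but to the same effect), the same truncated Gaussian-moment computation for the $by$ term that produces $\psi$, and the same conversion $E_{X}[o_X(1)]=o_A(1)$ via escape-to-infinity. One small correction worth noting: the right tail $(A+A\delta,\infty)$ is not handled via escape-to-infinity (the optimal input escapes toward $A$, not away from it) but by a direct uniform-in-$X$ Gaussian tail estimate, while escape-to-infinity is invoked only for the left tail and for the $o_X(1)$ residuals in the middle region.
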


\begin{proof}
See Appendix \ref{Appendix_D}.
\end{proof}

\subsection{Tightness of the Derived Bounds}
\label{section3_3}
In this subsection, the tightness of the derived channel capacity bounds will be investigated.
Moreover, for large $A$, the asymptotic tightness performance will also be analyzed.

According to \emph{Theorems \ref{the2}} and \emph{\ref{the3}}, the gap between upper and lower bounds is obtained in the following theorem.

\begin{theorem}
For indoor VLC with constraints (\ref{equ3}), (\ref{equ4}) and (\ref{equ5}), the performance gap between upper bound (\ref{a2}) and lower bound  (\ref{a1}) is given by
\begin{eqnarray}
{C_{{\rm{gap}}}} \!\!\!\!&=&\!\!\!\! {C_{{\rm{upp}}}} - {C_{{\rm{low}}}} \nonumber \\
 &=&\!\!\!\! \left\{ \begin{array}{l}
\ln \left( {\frac{{\sqrt {1 + {\varsigma ^2}A(1 + \delta )}  - 1}}{{(1 - 2\beta )(\sqrt {1 + {\varsigma ^2}A}  - 1)}}} \right) + {o_A}(1) - {f_{{\rm{low}}}}\left( {\xi P} \right),\;{\rm{if}}\;\alpha  = \frac{{{\varsigma ^2}A + \sqrt {1 + {\varsigma ^2}A}  - 1}}{{3{\varsigma ^2}A}}\\
\ln \left( {\frac{{G(b,{\varsigma ^2},A,\delta )}}{{(1 - 2\beta )g\left( {b,{\varsigma ^2},A} \right)}}} \right) + \psi \left( {b,{\varsigma ^2},A,{\sigma ^2},\xi ,P} \right) + {o_A}(1) + b\xi P - {f_{{\rm{low}}}}\left( {\xi P} \right),\\
\;\;\;\;\;\qquad\qquad\qquad\qquad\qquad\qquad{\rm{if}}\;\alpha  \ne \frac{{{\varsigma ^2}A + \sqrt {1 + {\varsigma ^2}A}  - 1}}{{3{\varsigma ^2}A}}\;{\rm{and}}\;\alpha  \in (0,P/A]
\end{array}\right..
\end{eqnarray}
\label{the4}
\end{theorem}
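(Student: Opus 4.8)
The plan is to prove \emph{Theorem \ref{the4}} by direct subtraction, since the gap $C_{\rm gap} = C_{\rm upp} - C_{\rm low}$ is obtained simply by combining the closed-form expressions already established in \emph{Theorem \ref{the2}} and \emph{Theorem \ref{the3}}. I would split the argument into the same two cases used in those theorems, namely $\alpha = ({\varsigma^2 A + \sqrt{1+\varsigma^2 A} - 1})/({3\varsigma^2 A})$ and $\alpha \neq ({\varsigma^2 A + \sqrt{1+\varsigma^2 A} - 1})/({3\varsigma^2 A})$ with $\alpha \in (0, P/A]$. Because both bounds are parametrized by the same quantities — in particular the same solution $b$ of the transcendental equation (\ref{equ20}) — the two cases align term by term, and no new analytical machinery is needed beyond careful bookkeeping of logarithms and additive constants.

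For the first case, I would substitute the first branch of (\ref{a2}) and (\ref{a1}) into $C_{\rm upp} - C_{\rm low}$. The key observation is that the common factor $1/(\varsigma^2\sqrt{2\pi e\sigma^2})$ inside both logarithms cancels when the two logarithmic terms are merged via $\ln u - \ln v = \ln(u/v)$. This leaves precisely $\ln\left(\frac{\sqrt{1+\varsigma^2 A(1+\delta)} - 1}{(1-2\beta)(\sqrt{1+\varsigma^2 A} - 1)}\right)$, after which the residual $o_A(1)$ from the asymptotic upper bound and the $-f_{\rm low}(\xi P)$ term from the lower bound are simply carried through, yielding the first branch of $C_{\rm gap}$. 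For the second case the procedure is identical: subtracting the second branches, the $1/(\varsigma^2\sqrt{2\pi e\sigma^2})$ factors again cancel inside the combined logarithm, producing $\ln\left(\frac{G(b,\varsigma^2,A,\delta)}{(1-2\beta)g(b,\varsigma^2,A)}\right)$; the additive term $\psi(b,\varsigma^2,A,\sigma^2,\xi,P)$ and $o_A(1)$ come from the upper bound, while negating the lower bound contributes $+b\xi P$ and $-f_{\rm low}(\xi P)$. Collecting these reproduces the second branch of the claimed expression.

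The main point requiring care — rather than a hard obstacle — will be the consistency of the parameter $b$ and the matching of the case conditions. I would need to verify that the $b$ appearing in $g(b,\varsigma^2,A)$ in the lower bound is the same $b$ used in $G(b,\varsigma^2,A,\delta)$ and $\psi(\cdot)$ in the upper bound, and that the validity regimes of the two bounds are compatible when the difference is formed (recall the lower bound holds for all $A$, whereas the upper bound is only asymptotic). Since both bounds invoke the same transcendental equation (\ref{equ20}) to define $b$, this consistency will hold, and the $o_A(1)$ notation correctly absorbs the asymptotic nature of the upper bound. The result then follows by elementary algebraic combination.
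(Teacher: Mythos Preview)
Your proposal is correct and matches the paper's own treatment: the paper states that the proof is straightforward and omits it, which is precisely the direct term-by-term subtraction you describe. The only thing you might trim is the discussion of ``consistency of $b$'' and validity regimes, since these are already built into the statements of \emph{Theorems~\ref{the2}} and \emph{\ref{the3}} and require no separate verification here.
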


\begin{proof}
The proof is straightforward and is omitted.
\end{proof}

In an indoor VLC environment, the peak optical intensity $A$ is often large to meet the illumination requirements.
Therefore, we are more interested in the asymptotic performance at large peak optical intensity.
Based on \emph{Theorem \ref{the4}}, the following corollary is derived.

\begin{corollary}
\label{cor1}
For VLC with constraints (\ref{equ3}), (\ref{equ4}) and (\ref{equ5}), the asymptotic performance gap between upper bound (\ref{a2}) and lower bound  (\ref{a1}) is given by
\begin{eqnarray}
\mathop {\lim }\limits_{A \to \infty } {C_{{\rm{gap}}}} = \left\{ \begin{array}{l}
\ln \left( {\frac{{\sqrt {1 + \delta } }}{{1 - 2\beta }}} \right),\;{\rm{if}}\;\alpha  = \frac{{{\varsigma ^2}A + \sqrt {1 + {\varsigma ^2}A}  - 1}}{{3{\varsigma ^2}A}}\\
\ln \left( {\frac{1}{{1 - 2\beta }}} \right), \;{\rm{if}}\;\alpha  \ne \frac{{{\varsigma ^2}A + \sqrt {1 + {\varsigma ^2}A}  - 1}}{{3{\varsigma ^2}A}}\;{\rm{and}}\;\alpha  \in (0,P/A]
\end{array} \right..
\end{eqnarray}
\label{coro1}
\end{corollary}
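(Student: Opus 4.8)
The plan is to insert the closed-form bounds of \emph{Theorems \ref{the2}} and \emph{\ref{the3}} into the gap $C_{\rm gap}=C_{\rm upp}-C_{\rm low}$ of \emph{Theorem \ref{the4}} and evaluate $\lim_{A\to\infty}$ term by term. Two ingredients are common to both branches. First, $o_A(1)\to 0$ by definition of the symbol. Second, since the average optical intensity $\xi P=\alpha A$ grows without bound, I would show $f_{\rm low}(\xi P)\to 0$ directly from (\ref{equ12}): the logarithm $\tfrac12\ln(1+2\varsigma^2\sigma^2/(\xi P))$ vanishes, while expanding $\sqrt{\xi P(\xi P+2\varsigma^2\sigma^2)}=\xi P\sqrt{1+2\varsigma^2\sigma^2/(\xi P)}=\xi P+\varsigma^2\sigma^2+o(1)$ cancels the linear term $-\xi P/(\varsigma^2\sigma^2)$ and the additive constant, leaving $0$. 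Thus only the $\ln(\cdot)$, $\psi$, and $b\xi P$ contributions can survive.

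For the first branch ($b=0$) the limit is immediate: factoring $\varsigma\sqrt{A}$ out of the numerator and denominator of the log argument gives $(\sqrt{1+\varsigma^2A(1+\delta)}-1)/(\sqrt{1+\varsigma^2A}-1)\to\sqrt{1+\delta}$, so that $C_{\rm gap}\to\ln(\sqrt{1+\delta}/(1-2\beta))$, as claimed.

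The second branch is the crux, and here I would first pin down the asymptotics of the root $b$ of the transcendental equation (\ref{equ20}). In the high-intensity regime the relevant solution is negative and satisfies $b\to 0^{-}$ with $bA\to-\infty$; the leading balance in (\ref{equ20}) yields $b\approx-1/(2\xi P)$, so $b\xi P\to-\tfrac12$ stays bounded while $bA=(b\xi P)(A/\xi P)\to-\infty$. With $b<0$ and $bA\to-\infty$, the increment $G(b,\varsigma^2,A,\delta)-g(b,\varsigma^2,A)=\int_{\sqrt{1+\varsigma^2A}}^{\sqrt{1+\varsigma^2A(1+\delta)}}e^{b(t^2-1)/\varsigma^2}{\rm d}t$ is controlled by the integrand at its left end, $e^{bA}\to 0$, hence is exponentially small, whereas $g$ itself does not vanish; therefore $G/g\to 1$ and $\ln\!\big(G/((1-2\beta)g)\big)\to\ln(1/(1-2\beta))$. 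It remains to show $\psi+b\xi P\to 0$: taking the $b<0$ branch of (\ref{equ33}), the prefactor $b\sqrt{(1+A\varsigma^2)\sigma^2}\,e^{-A^2/[2(1+A\varsigma^2)\sigma^2]}$ is a polynomial in $A$ times $e^{-A/(2\varsigma^2\sigma^2)(1+o(1))}$ and hence tends to $0$, leaving $\psi\to-b\xi P$, so $\psi+b\xi P\to 0$. Collecting the surviving terms produces $\ln(1/(1-2\beta))$, the second branch.

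The main obstacle is precisely this second-case asymptotic analysis: one must rigorously justify the scaling $b\approx-1/(2\xi P)$ from (\ref{equ20}), controlling the $e^{bA}$ term and the integral $g$ simultaneously (an {\rm erf}/Laplace-type estimate), and then establish $bA\to-\infty$ so that the error-function and Gaussian-tail factors saturate and force $G/g\to1$. Some care is also needed in fixing the precise high-intensity regime, because $\alpha$ and $\xi P$ cannot both be held fixed: the clean value $\ln(1/(1-2\beta))$ requires $\xi P\to\infty$ (so that $f_{\rm low}(\xi P)\to0$) together with $\alpha=\xi P/A\to0$ (so that $bA\to-\infty$ and $G/g\to1$), in contrast to the first branch, where the condition forces $\alpha\to\tfrac13$ and the residual factor $\sqrt{1+\delta}$ persists.
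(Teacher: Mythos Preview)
Your overall plan---term-by-term limits of the gap in \emph{Theorem~\ref{the4}}, with $o_A(1)\to0$ and $f_{\rm low}(\xi P)\to0$ because $\xi P=\alpha A\to\infty$---is exactly the paper's route, and your treatment of the first branch is identical to the paper's (it just invokes L'H\^opital to reach the same $\sqrt{1+\delta}$).

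The second branch is where you diverge. The paper's argument there is far terser than yours: it simply writes
\[
\lim_{A\to\infty}C_{\rm gap}=\ln\!\Big(\tfrac{1}{1-2\beta}\Big)+\lim_{A\to\infty}\big[\psi+b\xi P\big]-\lim_{A\to\infty}f_{\rm low}(\xi P),
\]
thereby \emph{asserting} $\ln\!\big(G/((1-2\beta)g)\big)\to\ln(1/(1-2\beta))$ without any analysis of $b$ or of the ratio $G/g$, and then checks $\psi+b\xi P\to0$ by direct inspection of~(\ref{equ33}) in \emph{both} sign regimes $b<0$ and $b>0$. So your careful programme of extracting $b\sim-1/(2\xi P)$, $bA\to-\infty$, and bounding the increment $G-g$ is substantially more than what the paper supplies; in effect you are trying to fill a gap the paper leaves open.

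That said, two points separate your proposal from the paper. First, the paper handles both signs of $b$ in the $\psi+b\xi P$ limit, whereas you commit to $b<0$; since the second branch covers all $\alpha\in(0,P/A]$ away from the critical value, and the critical value tends to $1/3$, values $\alpha>1/3$ force $b>0$ for large $A$ and are not covered by your argument. Second, the paper does not impose your additional hypothesis $\alpha\to0$: it treats $\alpha$ as fixed with $A\to\infty$ and simply takes the stated limit. Your concern that fixed $\alpha$ may leave a residual $\delta$-dependence in $G/g$ is a legitimate subtlety, but it is one the paper does not engage with; so if you want to match the paper you should drop the extra $\alpha\to0$ assumption, add the $b>0$ case for $\psi+b\xi P$, and (as the paper does) leave the $G/g\to1$ step at the level of an assertion.
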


\begin{proof}
See Appendix \ref{appd}.
\end{proof}

\begin{remark}
Because both $\delta$ and $\beta$ are small positive numbers,
the asymptotic performance gap in \emph{Corollary \ref{coro1}} is small and can be ignored.
This indicates the derived upper and lower bounds are tight for large $A$.
\label{rem1}
\end{remark}

\section{Channel Capacity With Only The Average Optical Intensity Constraint}
\label{section4}
In this section, we investigate the channel capacity bounds without peak optical intensity constraint.
Based on constraints (\ref{equ3}) and (\ref{equ5}), the capacity bounds will be derived.
Moreover, the tightness of the derived bounds will also be verified.

\subsection{Capacity Lower Bound}
\label{section4_1}
For an indoor VLC system having constraints (\ref{equ3}) and (\ref{equ5}),
a lower bound of capacity can also be expressed as (\ref{equ13}).
In this scenario, ${\cal J}\left[ f_X(x) \right]$ becomes
\begin{equation}
{\cal J}\left[ f_X(x) \right] = {  \int_0^\infty {f_X(x)} \ln \left( {f_X(x)} \right){\rm{d}}x +\frac{1}{2}\int_0^\infty {\ln \left( {1 + {\varsigma ^2}x} \right)f_X(x)} {\rm{d}}x}.
\end{equation}
To obtain a tight lower bound, we formulate the following optimization problem
\begin{eqnarray}
\begin{split}
&\mathop {\min }\limits_{{f_X}\left( x \right)} \; {\cal J}[f_X(x)]  \\
{\rm s.t.}\qquad & {\int_0^\infty  {{f_X}\left( x \right){\rm{d}}x}  = 1} \\
& {\int_0^\infty  {x{f_X}\left( x \right){\rm{d}}x}  = \xi P}.
\end{split}
\label{equ37}
\end{eqnarray}
By solving problem (\ref{equ37}) using a variational method \cite{BIB25_1}, we obtain the following theorem.

\begin{theorem}
The optimal input PDF for the optimization problem (\ref{equ37}) is obtained as
\begin{eqnarray}
{f_X}\left( x \right) = \frac{1}{{\sqrt {1 + x{\varsigma ^2}} }}{e^{ - m - 1 - nx}},\quad x \ge 0, n>0,
\label{equ38}
\end{eqnarray}
where $m$ and $n$ are determined by the following two equalities
\begin{eqnarray}
\left\{ {\begin{array}{*{20}{c}}
{\frac{2}{{{\varsigma ^2}}}{e^{ - m - 1 + \frac{n}{{{\varsigma ^2}}}}}\sqrt {\frac{{\pi {\varsigma ^2}}}{n}} {\cal Q}\left( {\sqrt {\frac{{2n}}{{{\varsigma ^2}}}} } \right) = 1}\\
{\frac{1}{{{\varsigma ^2}n}}{e^{ - m - 1}} + \frac{1}{{2n}} - \frac{1}{{{\varsigma ^2}}} = \xi P}
\end{array}} \right..
\label{equ39}
\end{eqnarray}
\label{the5}
\end{theorem}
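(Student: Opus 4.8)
The plan is to treat the functional optimization (\ref{equ37}) as a constrained calculus-of-variations problem and solve it with Lagrange multipliers, mirroring the technique already used for \emph{Theorem \ref{the1}}. First I would adjoin the two equality constraints to the objective $\mathcal{J}[f_X(x)]$, forming the augmented functional
\begin{equation}
L[f_X] = \mathcal{J}[f_X] + m\left(\int_0^\infty f_X(x)\,{\rm d}x - 1\right) + n\left(\int_0^\infty x f_X(x)\,{\rm d}x - \xi P\right),
\end{equation}
where $m$ and $n$ are the multipliers enforcing normalization and the average-intensity constraint, respectively. Because the integrand depends on $f_X$ but not on its derivative, the Euler--Lagrange equation collapses to the pointwise stationarity condition $\partial/\partial f_X = 0$, which reads $\ln f_X(x) + 1 + \tfrac{1}{2}\ln(1+\varsigma^2 x) + m + nx = 0$. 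Solving for $f_X(x)$ gives $f_X(x) = (1+\varsigma^2 x)^{-1/2}\,e^{-m-1-nx}$, which is exactly (\ref{equ38}). Since $t\ln t$ is convex and the other terms are linear in $f_X$, the functional $\mathcal{J}$ is strictly convex, so this stationary point is the unique global minimizer; this settles sufficiency.

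It then remains to pin down $m$ and $n$ through the two constraints, which will generate the system (\ref{equ39}). For the normalization constraint I would apply the substitution $t=\sqrt{1+\varsigma^2 x}$ to reduce $\int_0^\infty (1+\varsigma^2 x)^{-1/2}e^{-nx}\,{\rm d}x$ to a Gaussian tail integral and identify it with a $\mathcal{Q}$-function, which reproduces the first equality in (\ref{equ39}). For the average-intensity constraint, rather than attacking $\int_0^\infty x(1+\varsigma^2 x)^{-1/2}e^{-nx}\,{\rm d}x$ head-on, I would integrate $\frac{\rm d}{{\rm d}x}\bigl[\sqrt{1+\varsigma^2 x}\,e^{-nx}\bigr]$ over $[0,\infty)$: the boundary term evaluates to $-1$, and writing $\sqrt{1+\varsigma^2 x}=(1+\varsigma^2 x)/\sqrt{1+\varsigma^2 x}$ expresses the result through the normalization integral and the mean integral. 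Combined with the relation $e^{-m-1}\int_0^\infty (1+\varsigma^2 x)^{-1/2}e^{-nx}\,{\rm d}x = 1$ from the first constraint, a short rearrangement then yields the second equality in (\ref{equ39}).

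One point deserving care, and which distinguishes this scenario from \emph{Theorem \ref{the1}}, is the requirement $n>0$: with the peak constraint removed the support is the whole half-line, so the exponential factor must decay for the integrals to converge and for $f_X$ to be a valid PDF; I would flag this as the reason the claimed form carries the condition $n>0$. I expect the main obstacle to be the integration-by-parts identity that converts the mean integral into the clean closed form of (\ref{equ39}): the normalization integral is a routine substitution, but obtaining the second equality without leaving behind an intractable special function hinges on the boundary-term trick above. Finally, I would argue that the transcendental system admits a feasible solution $(m,n)$ with $n>0$ for every target $\xi P>0$ by invoking monotonicity of the constraint functions in $n$, thereby confirming that the optimizer in (\ref{equ38}) is well defined.
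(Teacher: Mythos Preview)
Your proposal is correct and follows essentially the same variational route as the paper: both derive the pointwise stationarity condition $\ln f_X(x)+1+\tfrac12\ln(1+\varsigma^2 x)=-m-nx$, solve for the exponential form~(\ref{equ38}), and then fix $m,n$ from the two constraints. The only cosmetic differences are that the paper phrases the variation via a perturbation $f_X+\varepsilon\eta$ (as in Appendix~\ref{Appendix_B}) rather than a Lagrangian, and it disposes of $n\le 0$ by an explicit three-case check instead of your integrability remark; your added convexity observation and the integration-by-parts identity for the mean integral are clean touches that the paper omits but do not change the argument.
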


\begin{proof}
See Appendix \ref{Appendix_E}.
\end{proof}
Substituting (\ref{equ38}) into (\ref{equ13}), we obtain the following lower bound.

\begin{theorem}
For indoor VLC with constraints (\ref{equ3}) and (\ref{equ5}), a lower bound of the channel capacity is derived as\footnote{Similar to the lower bound in \emph{Theorem \ref{the2}}, the lower bound in \emph{Theorem \ref{the6}} is also valid for all values of optical intensities.}
\begin{equation}
C_{\rm low} = - \frac{1}{2}\ln \left( {2\pi e{\sigma ^2}} \right) + 1 + m + n\xi P + {f_{{\rm{low}}}}\left( {\xi P} \right),
\label{equ40}
\end{equation}
where $m$ and $n$ are the solution to (\ref{equ39}), and the function ${f_{{\rm{low}}}}\left( {\xi P} \right)$ is defined by (\ref{equ12}).
\label{the6}
\end{theorem}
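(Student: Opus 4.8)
The plan is to substitute the optimal input PDF from \emph{Theorem \ref{the5}} directly into the lower bound expression (\ref{equ13}) and to exploit the structural cancellation that is built into that PDF. Recall that in this scenario the functional $\mathcal{J}[f_X(x)]$ consists of the differential-entropy term $\int_0^\infty f_X(x)\ln(f_X(x))\,{\rm d}x$ plus one half of the integral of $\ln(1+\varsigma^2 x)$ against $f_X$. First I would take the logarithm of the PDF in (\ref{equ38}), namely $\ln f_X(x) = -\frac{1}{2}\ln(1+x\varsigma^2) - m - 1 - nx$, and insert it into the entropy integral.

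Second, I would split the entropy integral into three pieces: the $-\frac{1}{2}\ln(1+x\varsigma^2)$ term, the constant $-(m+1)$ term, and the linear $-nx$ term. The constant term integrates against the normalization constraint $\int_0^\infty f_X(x)\,{\rm d}x = 1$ from (\ref{equ37}) to give $-(m+1)$, while the linear term integrates against the mean constraint $\int_0^\infty x f_X(x)\,{\rm d}x = \xi P$ to give $-n\xi P$. The crucial observation is then that the surviving $-\frac{1}{2}\int_0^\infty \ln(1+\varsigma^2 x) f_X(x)\,{\rm d}x$ piece of the entropy integral exactly cancels the $+\frac{1}{2}\int_0^\infty \ln(1+\varsigma^2 x) f_X(x)\,{\rm d}x$ piece carried in $\mathcal{J}$. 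This cancellation is precisely the reason the $1/\sqrt{1+x\varsigma^2}$ factor appears in the optimal PDF of \emph{Theorem \ref{the5}}, so that $\mathcal{J}[f_X(x)]$ collapses to the clean expression $-m-1-n\xi P$.

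Finally, I would substitute $\mathcal{J}[f_X(x)] = -m-1-n\xi P$ into the bound (\ref{equ13}), i.e., $C \ge -\mathcal{J}[f_X(x)] + f_{\rm low}(\xi P) - \frac{1}{2}\ln(2\pi e\sigma^2)$, which immediately yields $C_{\rm low} = -\frac{1}{2}\ln(2\pi e\sigma^2) + 1 + m + n\xi P + f_{\rm low}(\xi P)$, exactly matching (\ref{equ40}). I do not anticipate a genuine obstacle here: once \emph{Theorem \ref{the5}} is established, this is a routine substitution. The only points demanding mild care are verifying that the entropy integral converges and that the two equalities in (\ref{equ39}) are invoked in the right places; since the PDF decays exponentially and these are exactly the constraints of the optimization problem (\ref{equ37}), the calculation goes through without difficulty.
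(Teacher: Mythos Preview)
Your proposal is correct and follows essentially the same approach as the paper's own proof: substitute the PDF from \emph{Theorem \ref{the5}} into (\ref{equ13}), expand $\ln f_X(x)$, observe that the $-\tfrac{1}{2}\ln(1+\varsigma^2 x)$ piece cancels the corresponding term in $\mathcal{J}$, and then invoke the two constraints in (\ref{equ37}) to reduce the remaining integrals to $1+m+n\xi P$. Your write-up is in fact slightly more explicit about the cancellation mechanism than the paper, but the argument is the same.
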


\begin{proof}
See Appendix \ref{Appendix_F}.
\end{proof}

\subsection{Capacity Upper Bound}
In this subsection, the dual expression of capacity is also employed.
In this case, eqs. (\ref{equ26}) and (\ref{equ27}) can be derived as well.
To find a suitable upper bound, we choose $R_Y(y)$ as
\begin{eqnarray}
{R_Y}\left( y \right) = \left\{ {\begin{array}{*{20}{c}}
{\frac{{2\beta }}{{\sqrt {2\pi } }}{e^{ - \frac{{{y^2}}}{2}}}, \quad \quad \quad \quad \quad\quad\;\;\;  y \in \left( { - \infty ,0} \right)}\\
{\frac{{\left( {1 - \beta } \right){\varsigma ^2}{e^{ - ny}}}}{{2\sqrt {\frac{{\pi {\varsigma ^2}}}{n}} {e^{\frac{n}{{{\varsigma ^2}}}}}{\cal Q}\left( {\sqrt {\frac{{2n}}{{{\varsigma ^2}}}} } \right)\sqrt {1 + {\varsigma ^2}y} }},\; y \in \left[ {0,\infty } \right)}
\end{array}} \right..
\label{equ43}
\end{eqnarray}

According to (\ref{equ26}), (\ref{equ27}) and (\ref{equ43}), and the principle of ``capacity-achieving source distributions that escape to infinity", an upper bound in this case can be obtained in \emph{Theorem \ref{the7}}.

\begin{theorem}
For indoor VLC with constraints (\ref{equ3}) and (\ref{equ5}), an upper bound of the channel capacity is expressed as\footnote{Similar to the upper bound in \emph{Theorem \ref{the3}}, the upper bound in \emph{Theorem \ref{the7}} is also valid only when the optical intensity is sufficiently large.}
\begin{eqnarray}
C_{\rm upp}=  - \frac{1}{2}\ln \left( {2\pi e{\sigma ^2}} \right) + 1 + m + n\xi P + \ln \left( {\frac{1}{{1 - \beta }}} \right) + {o_P}\left( 1 \right),
\label{equ44}
\end{eqnarray}
where $m$ and $n$ are the solution to (\ref{equ39}), and $\beta \in (0,1) $ is arbitrary.
\label{the7}
\end{theorem}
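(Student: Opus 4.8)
The plan is to follow the same dual-capacity route used for \emph{Theorem \ref{the3}}, now specialized to the peak-free reference output density $R_Y(y)$ in (\ref{equ43}). I start from the dual bound (\ref{equ26}) together with the relative-entropy identity (\ref{equ27}), so that the entire task reduces to evaluating $E_{X:f_X=f^*_X}\!\left[D\!\left(f_{Y|X}(y|X)\,\|\,R_Y(y)\right)\right]$ in the regime of large average optical intensity. Substituting (\ref{equ43}) into $\int_{-\infty}^{\infty}f_{Y|X}(y|X)\ln R_Y(y)\,{\rm d}y$, I split the integration at $y=0$: the part over $(-\infty,0)$ is governed by the Gaussian piece $\frac{2\beta}{\sqrt{2\pi}}e^{-y^2/2}$, and the part over $[0,\infty)$ by the main piece.

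The core computation is the $[0,\infty)$ integral. Writing out the logarithm of the second branch of (\ref{equ43}) splits $\ln R_Y(y)$ into a constant $c\triangleq\ln\!\left(\frac{(1-\beta)\varsigma^2}{2\sqrt{\pi\varsigma^2/n}\,e^{n/\varsigma^2}{\cal Q}\!\left(\sqrt{2n/\varsigma^2}\right)}\right)$, a linear term $-ny$, and a term $-\frac{1}{2}\ln(1+\varsigma^2 y)$. Integrating each against $f_{Y|X}(y|X)$ and invoking the principle of ``capacity-achieving source distributions that escape to infinity'', so that the conditional law concentrates near its mean $y=X$ for large $X$, I obtain $\int_0^{\infty}f_{Y|X}(y|X)\,{\rm d}y=1+{o_P}(1)$, $\int_0^{\infty}y\,f_{Y|X}(y|X)\,{\rm d}y=X+{o_P}(1)$ since $E(Y|X)=X$, and $\int_0^{\infty}\frac{1}{2}\ln(1+\varsigma^2 y)f_{Y|X}(y|X)\,{\rm d}y=\frac{1}{2}\ln(1+\varsigma^2 X)+{o_P}(1)$. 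The decisive observation is that this last term cancels exactly the $+\frac{1}{2}\ln(1+\varsigma^2 X)$ contained in the leading $-\frac{1}{2}\ln\!\left(2\pi e\sigma^2(1+\varsigma^2 X)\right)$ of (\ref{equ27}), leaving only the signal-independent $-\frac{1}{2}\ln(2\pi e\sigma^2)$ together with the linear term $nX$ and the constant $-c$.

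After this cancellation the conditional relative entropy reads $D\!\left(f_{Y|X}(y|X)\,\|\,R_Y(y)\right)=-\frac{1}{2}\ln(2\pi e\sigma^2)-c+nX+{o_P}(1)$. Taking the expectation over $X$ and invoking the average-intensity constraint (\ref{equ5}), $E_X(X)=\xi P$, turns $nX$ into $n\xi P$. It then remains to simplify $-c$: the first equation of (\ref{equ39}) states precisely that $\frac{2}{\varsigma^2}e^{n/\varsigma^2}\sqrt{\pi\varsigma^2/n}\,{\cal Q}\!\left(\sqrt{2n/\varsigma^2}\right)=e^{m+1}$, whence $c=\ln(1-\beta)-(m+1)$ and therefore $-c=(m+1)+\ln\!\left(\frac{1}{1-\beta}\right)$. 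Collecting the terms reproduces (\ref{equ44}) exactly.

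The main obstacle is making the ${o_P}(1)$ bookkeeping rigorous rather than heuristic. Three error sources must each be shown to vanish as $P\to\infty$ after averaging over $X$: the entire $(-\infty,0)$ contribution, whose size is controlled by the lower tail ${\cal Q}\!\left(X/\sqrt{(1+\varsigma^2 X)\sigma^2}\right)$ of the conditional Gaussian; the normalization deficit $1-\int_0^{\infty}f_{Y|X}(y|X)\,{\rm d}y$; and the truncation error incurred in replacing $\int_0^{\infty}\frac{1}{2}\ln(1+\varsigma^2 y)f_{Y|X}(y|X)\,{\rm d}y$ by $\frac{1}{2}\ln(1+\varsigma^2 X)$. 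The escape-to-infinity property guarantees that the capacity-achieving $f^*_X$ assigns vanishing mass to any bounded set, so each of these terms is absorbed into ${o_P}(1)$; the delicate point is a uniform bound on the $\ln(1+\varsigma^2 y)$ moment against the Gaussian tail, which I would control using the concavity of $\ln(1+\varsigma^2 y)$ and standard ${\cal Q}$-function tail estimates, paralleling the handling of $\psi$ in \emph{Theorem \ref{the3}}.
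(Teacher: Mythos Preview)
Your proposal is correct and follows essentially the same route as the paper: the paper also substitutes (\ref{equ43}) into (\ref{equ27}), isolates the $(-\infty,0)$ contribution as $D_1=o_X(1)$, decomposes the $[0,\infty)$ integral into the same constant, linear-in-$y$, and $\tfrac12\ln(1+\varsigma^2 y)$ pieces, bounds the latter two by referring back to the estimates for $D_4$ and $D_6$ from Appendix~\ref{Appendix_D}, and then invokes escape-to-infinity together with the first equation of (\ref{equ39}) to collapse the constant into $1+m+\ln\!\bigl(\frac{1}{1-\beta}\bigr)$. The only cosmetic difference is that the paper writes explicit one-sided inequalities for the moment integrals (so the bound is manifestly an upper bound before passing to the limit), whereas you phrase things as asymptotic equalities and defer the inequality direction to the error discussion; for a fully rigorous write-up you should keep the inequalities, which is exactly what your final paragraph promises.
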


\begin{proof}
See Appendix \ref{Appendix_G}.
\end{proof}

\subsection{Tightness of the Derived Bounds}
According to \emph{Theorem \ref{the6}} and \emph{Theorem \ref{the7}},
the performance gap between the upper bound and the lower bound is derived in \emph{Theorem \ref{the8}}.

\begin{theorem}
For indoor VLC with constraints (\ref{equ3}) and (\ref{equ5}), the performance gap between the upper bound in (\ref{equ44}) and the lower bound in (\ref{equ40}) is given by
\begin{eqnarray}
{C_{{\rm{gap}}}} &=& {C_{{\rm{upp}}}} - {C_{{\rm{low}}}} \nonumber \\
&=&   \ln \left( {\frac{1}{{1 - \beta }}} \right) + {o_P}\left( 1 \right)   - {f_{{\rm{low}}}}\left( {\xi P} \right).
\end{eqnarray}
\label{the8}
\end{theorem}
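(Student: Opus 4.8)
The plan is to obtain the gap by direct term-by-term subtraction of the lower bound in \emph{Theorem \ref{the6}} from the upper bound in \emph{Theorem \ref{the7}}, since the two expressions are structurally identical up to two additive terms. First I would place the two bounds side by side. The lower bound reads
\begin{equation*}
C_{\rm low} = -\frac{1}{2}\ln\left(2\pi e\sigma^2\right) + 1 + m + n\xi P + f_{\rm low}(\xi P),
\end{equation*}
while the upper bound reads
\begin{equation*}
C_{\rm upp} = -\frac{1}{2}\ln\left(2\pi e\sigma^2\right) + 1 + m + n\xi P + \ln\left(\frac{1}{1-\beta}\right) + o_P(1).
\end{equation*}

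The crucial observation is that the constants $m$ and $n$ appearing in the two bounds are \emph{the same}: in both \emph{Theorem \ref{the6}} and \emph{Theorem \ref{the7}} they are defined as the common solution of the coupled system (\ref{equ39}). Therefore the four shared terms $-\frac{1}{2}\ln(2\pi e\sigma^2)$, $1$, $m$, and $n\xi P$ cancel identically upon subtraction, leaving only the term $\ln\!\big(1/(1-\beta)\big)+o_P(1)$ carried by the upper bound and the term $f_{\rm low}(\xi P)$ carried by the lower bound. This immediately gives
\begin{equation*}
C_{\rm gap} = C_{\rm upp} - C_{\rm low} = \ln\left(\frac{1}{1-\beta}\right) + o_P(1) - f_{\rm low}(\xi P),
\end{equation*}
which is precisely the claimed expression; the $o_P(1)$ residual simply propagates from the asymptotic nature of the upper bound.

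Because the argument is pure cancellation, I expect no genuine analytical obstacle here. The only point requiring care is confirming that $m$ and $n$ are indeed the identical pair solving (\ref{equ39}) in both theorems, so that they cancel exactly rather than leaving a residual; once this is checked the result follows in one line. For this reason I anticipate the authors will simply note that the proof is straightforward and omit it, as they did for the analogous \emph{Theorem \ref{the4}}.
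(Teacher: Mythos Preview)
Your proposal is correct and matches the paper's approach: the authors indeed state that the proof is straightforward and omit it, exactly as you anticipated. The direct term-by-term cancellation you describe (with the same $m$, $n$ from (\ref{equ39}) in both bounds) is precisely the intended one-line argument.
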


\begin{proof}
The proof is straightforward and is omitted.
\end{proof}

According to \emph{Theorem {\ref{the8}}}, for large $P$, the following corollary is derived.
\begin{corollary}
For VLC with constraints (\ref{equ3}) and (\ref{equ5}), the asymptotic performance gap between upper bound (\ref{equ44}) and lower bound (\ref{equ40}) is given by
\begin{equation}
\mathop {\lim }\limits_{P \to \infty } {C_{\rm gap}} = \ln \left( {\frac{1}{{1 - \beta }}} \right).
\label{equ45}
\end{equation}
\label{cor2}
\end{corollary}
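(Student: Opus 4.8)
The plan is to start directly from the expression for $C_{\rm gap}$ already established in \emph{Theorem \ref{the8}} and evaluate its limit as $P \to \infty$. From \emph{Theorem \ref{the8}} we have $C_{\rm gap} = \ln(1/(1-\beta)) + o_P(1) - f_{\rm low}(\xi P)$, where $f_{\rm low}$ is defined in (\ref{equ12}). The constant $\ln(1/(1-\beta))$ does not depend on $P$, and the term $o_P(1)$ vanishes by definition as $P \to \infty$. Hence the entire corollary reduces to establishing that $\lim_{P\to\infty} f_{\rm low}(\xi P) = 0$.

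To this end I would examine the three summands of $f_{\rm low}(\xi P)$ separately. The logarithmic term $\tfrac{1}{2}\ln(1 + 2\varsigma^2\sigma^2/(\xi P))$ clearly tends to $\tfrac{1}{2}\ln 1 = 0$, since the argument approaches $1$ as $P \to \infty$. The remaining two summands, namely $-(\xi P + \varsigma^2\sigma^2)/(\varsigma^2\sigma^2)$ and $\sqrt{\xi P(\xi P + 2\varsigma^2\sigma^2)}/(\varsigma^2\sigma^2)$, each diverge and together form an indeterminate difference of type $\infty - \infty$; resolving this is the only step requiring care.

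To handle the indeterminate form, write $u = \xi P$ and $c = \varsigma^2\sigma^2$ for brevity, so that the two divergent summands combine into $\tfrac{1}{c}\bigl[\sqrt{u(u+2c)} - (u+c)\bigr]$. Multiplying and dividing by the conjugate $\sqrt{u(u+2c)} + (u+c)$ gives $\tfrac{1}{c}\cdot\tfrac{u(u+2c) - (u+c)^2}{\sqrt{u(u+2c)} + (u+c)}$, and since the numerator simplifies to $u(u+2c) - (u+c)^2 = -c^2$, this expression equals $-c/\bigl[\sqrt{u(u+2c)} + (u+c)\bigr]$. As $P \to \infty$, i.e.\ $u \to \infty$, the denominator grows without bound, so this quantity tends to $0$. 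Combining the vanishing of all three summands yields $\lim_{P\to\infty} f_{\rm low}(\xi P) = 0$, and substituting this together with $o_P(1) \to 0$ into the expression from \emph{Theorem \ref{the8}} gives $\lim_{P\to\infty} C_{\rm gap} = \ln(1/(1-\beta))$, as claimed.

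I do not anticipate any substantive obstacle: the derivation is elementary once \emph{Theorem \ref{the8}} is granted. The single point that is not purely mechanical is the $\infty - \infty$ indeterminacy inside $f_{\rm low}$, which is dispatched routinely by conjugate rationalization.
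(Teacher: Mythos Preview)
Your proposal is correct and follows exactly the natural route the paper implicitly has in mind when it declares the proof ``straightforward and omitted'': start from the gap expression in \emph{Theorem \ref{the8}}, note that $o_P(1)\to 0$ by definition, and verify $\lim_{P\to\infty} f_{\rm low}(\xi P)=0$. Your conjugate-rationalization step resolving the $\infty-\infty$ indeterminacy is the clean way to handle the only nontrivial piece, and is in the same spirit as the paper's treatment of the analogous limit in Appendix~\ref{appd} for \emph{Corollary~\ref{coro1}}.
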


\begin{proof}
The proof of this corollary is straightforward and is omitted.
\end{proof}

\begin{remark}
Because $\beta$ is a small positive number, the asymptotic performance gap
in \emph{Corollary \ref{cor2}} is small enough to be ignored.
This indicates that upper bound (\ref{equ44}) and lower bound (\ref{equ40}) are asymptotically tight for large $P$.
\label{rem2}
\end{remark}

\section{Numerical Results}
\label{section5}
In this section, some representative results will be provided.
The accuracy of the derived capacity bounds will also be verified.
Without loss of generality,
the variance of the signal-independent noise is normalized to be one (i.e., $\sigma^2=1$),
both $\beta$ and $\delta$ are set to be 0.001.
Because practical VLC systems operate at high SNRs in indoor illumination environments, $A\; ({\rm or}\; P) \ge 30\; {\rm{dB}}$ is defined as the target VLC operation range.

\subsection{Results of VLC With Both Average and Peak Optical Intensity Constraints }
In this subsection, the lower bound (\ref{a1}) in \emph{Theorem \ref{the2}} and the upper bound (\ref{a2}) in \emph{Theorem \ref{the3}} will be verified.

Fig. \ref{fig2} plots the derived channel capacity bounds versus $A$ with different peak-to-nominal-optical-intensity ratios $A/P$ when $\xi  = 0.3$ and ${\varsigma ^2} = 1.5$.
Obviously, all capacity bounds monotonously increase with $A$.
For a small fixed $A$ value (i.e., $A\leq 50 \;{\rm dB}$), by increasing the value of $A/P$, $P$ decreases accordingly,
and thus all capacity bounds decrease.
However, when $A> 50 \;{\rm dB}$, the value of $A/P$ has almost no impact on capacity performance.
This is because all the terms related to $P$ in (\ref{a1}) and (\ref{a2}) disappear when $A$ is large. As it is known, the value of  $A/P$ reflects the luminous ability of an LED. The above phenomenon implies that the type of LED does not affect system performance when the optical intensity is large. That is, any type of high-optical-intensity LED can be adopted for indoor VLC.
Moreover, the upper bound and the lower bound are loose for small values of $A$.
The reason is that the upper bound is derived based on the concept of ``capacity-achieving source distributions that escape to infinity", and thus the upper bound is asymptotically valid only. In other words, the derived upper bound of capacity can be inaccurate for small values of $A$.
Fortunately, we only care about the capacity of VLC at high SNR.
Under such an operation condition, the upper bound and the lower bound asymptotically coincide,
which verifies the conclusion derived in \emph{Remark {\ref{rem1}}}.

\begin{figure}
\centering
\includegraphics[width=9.3cm]{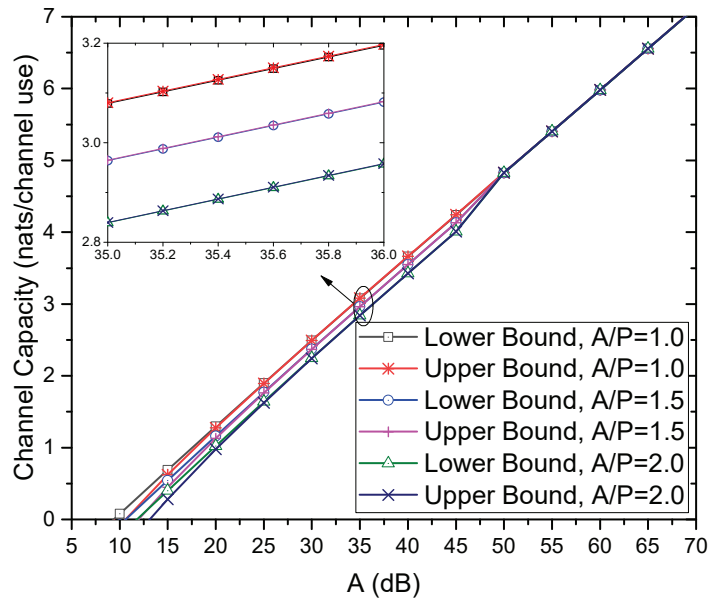}
\caption{Capacity bounds versus $A$ with different $A/P$ when $\xi  = 0.3$ and ${\varsigma ^2} = 1.5$ \label{fig2}}
\end{figure}

Fig. \ref{fig3} plots the channel capacity bounds versus $\xi $ with different values of $A/P$ when ${\varsigma ^2} = 1.5$ and $A = 45\;{\rm{dB}}$.
As seen, as the increase of $\xi $,
all channel capacity bounds first increase rapidly and then tend to stabilized values.
This indicates that the dimming target $\xi$ has great influences on the channel capacity.
Moreover, the channel capacity bounds decrease with $A/P$.
Once again, for all the capacity curves, the gaps between the upper bound and the lower bound are small, which can be ignored.

\begin{figure}
\centering
\includegraphics[width=9.3cm]{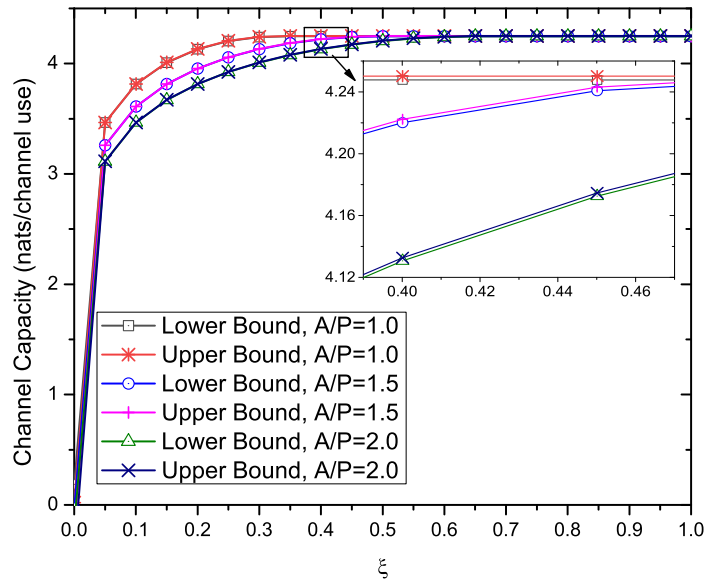}
\caption{Capacity bounds versus $\xi$ with different $A/P$ when ${\varsigma ^2} = 1.5$ and $A = 45\;{\rm{dB}}$ \label{fig3}}
\end{figure}

\begin{figure}
\centering
\includegraphics[width=9.3cm]{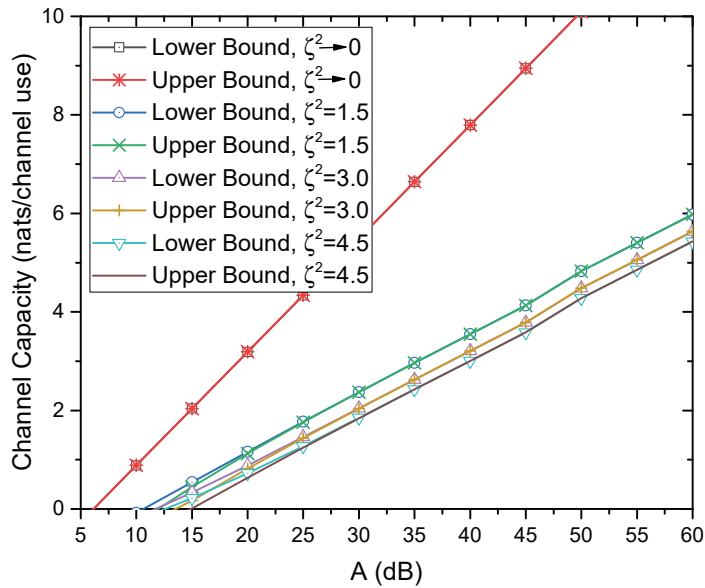}
\caption{Capacity bounds versus $A$ with different ${\varsigma ^2}$ when $\xi  = 0.3$ and $A = 1.5P$ \label{fig4}}
\end{figure}

Fig. \ref{fig4} plots the channel capacity bounds versus $A$ with different values of ${\varsigma ^2}$ when $\xi  = 0.3$ and $A = 1.5P$.
For comparison, the results of the capacity bounds when ${\varsigma ^2} \to 0$,
which corresponds to the scenario with only the signal-independent noise, are also presented.
It can be observed that the signal-dependent Gaussian noise greatly affects the system performance.
Specifically, the curves with ${\varsigma ^2} \to 0$ achieve the largest channel capacity bounds.
Moreover,  all channel capacity bounds decrease with ${\varsigma ^2}$.
Once again, the capacity lower and upper bounds are tight under target operation conditions.
Furthermore, as $A$ tends to infinity, the gaps between the upper bound and the lower bound asymptotically approach zero.

\subsection{Results of VLC Only With Average Optical Intensity Constraint}
In this subsection, the lower bound (\ref{equ40}) in \emph{Theorem \ref{the6}} and the upper bound (\ref{equ44}) in \emph{Theorem \ref{the7}} will be verified.

Fig. \ref{fig5} plots the channel capacity bounds versus $P$ with different values of $\xi$ when ${\varsigma ^2} =1.5$.
Obviously, with the increase of $P$, the capacity bounds for all curves increase monotonously.
For a fixed $P$, all channel capacity bounds increase with $\xi$.
For a larger dimming target $\xi$, a larger average optical intensity can be produced, and the ability of the channel to transmit information can be improved.
Moreover, for a fixed $\xi$, the lower bound and the upper bound almost coincide with each other when $P$ is large.
This verifies the conclusion in \emph{Remark \ref{rem2}}.

\begin{figure}
\centering
\includegraphics[width=9.3cm]{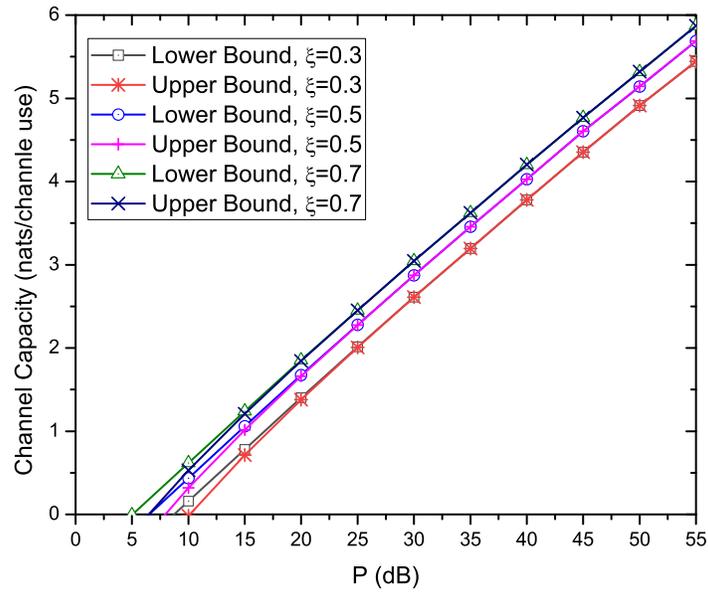}
\caption{Capacity bounds versus $P$ with different $\xi$ when ${\varsigma ^2} =1.5$ \label{fig5}}
\end{figure}

\begin{figure}
\centering
\includegraphics[width=9.3cm]{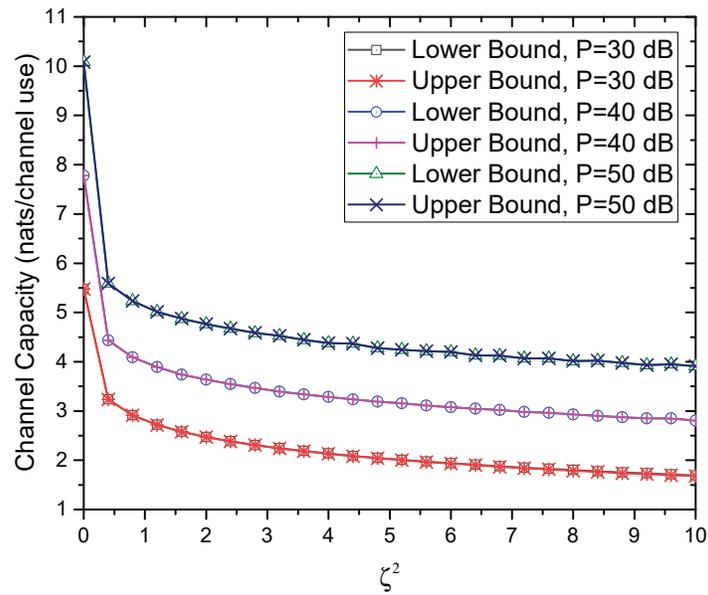}
\caption{Capacity bounds versus $\varsigma ^2$ with different $P$ when $\xi = 0.3$ \label{fig6}}
\end{figure}

Fig. \ref{fig6} plots the channel capacity bounds versus $\varsigma ^2$ with different values of $P$ when $\xi =0.3$.
As we can see, the channel capacity bounds decrease monotonically with $\varsigma ^2$, which is consistent with the conclusion in Fig. \ref{fig4}.
Moreover, the capacity performance for VLC improves with $P$.
To quantify, when the nominal optical intensity is increased by 10 dB,
channel capacity is increased by 1.1 nats.
Once again, the lower bound and the upper bound on channel capacity are tight,
which verifies \emph{Remark \ref{rem2}}.

\subsection{Comparisons With Existing Results}
In this subsection, for the indoor VLC with signal-dependent AWGN, we compare
the capacity bounds having both average and peak optical intensity constraints with the capacity bounds having only the average optical intensity constraint.
As known, with only signal-independent noise, the capacity bounds of VLC having both average and peak optical intensity constraints were presented in \cite{BIB17}, while the capacity bounds of VLC having only the average optical intensity constraint were derived in \cite{BIB18}. Therefore, previously reported capacity bounds in \cite{BIB17} and \cite{BIB18} are presented here. Moreover, the classic Shannon's capacity over AWGN channel \cite{BIB25} is also provided for comparison.

\begin{figure}
\centering
\includegraphics[width=10cm]{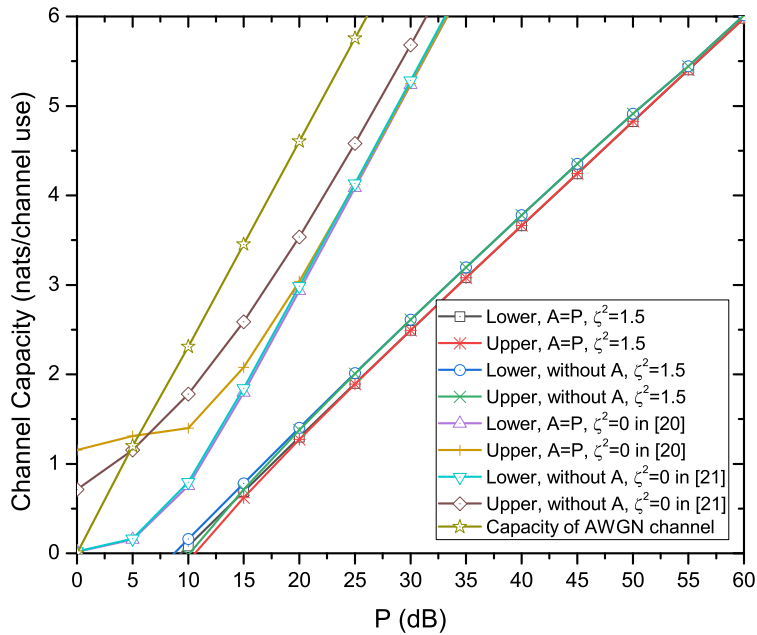}
\caption{Capacity bounds versus $P$ for different scenarios when $\xi  = 0.3$ \label{fig7}}
\end{figure}

Fig. \ref{fig7} plots the channel capacity bounds versus $P$ for different scenarios when $\xi  = 0.3$.
Similar to the previous figures, all capacity bounds increase monotonically with $P$.
Compared with the signal-independent noise in \cite{BIB17} and \cite{BIB18},
the signal-dependent noise in this paper degrades the capacity performance.
This indicates that the signal-dependent noise has a large impact on the capacity performance.
Moreover, the VLC without peak optical intensity constraint can obtain a slightly larger capacity than that with peak optical intensity constraint.
This shows that the peak optical intensity of the LED results in a channel capacity loss.
Furthermore, the capacity of AWGN channel \cite{BIB25} is larger than that of the VLC channel.

\begin{figure}
\centering
\includegraphics[width=10cm]{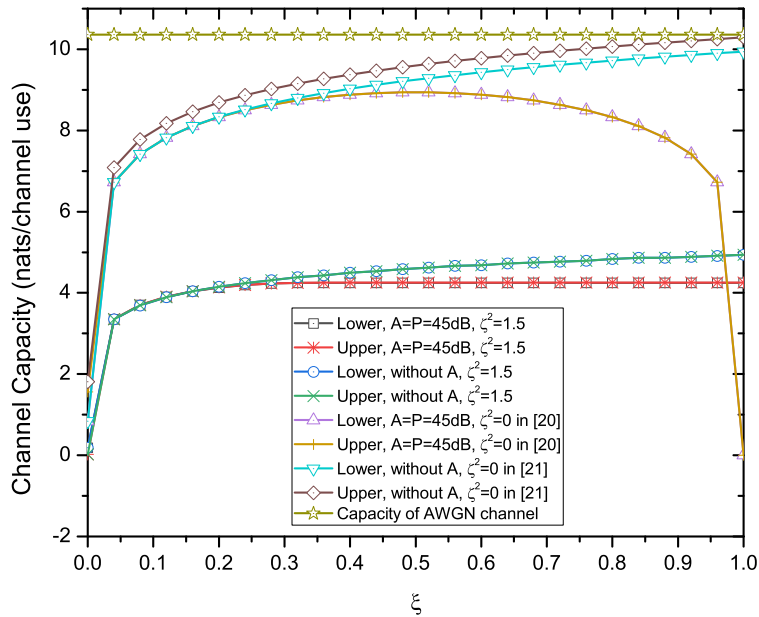}
\caption{Capacity bounds versus $\xi $ for different scenarios \label{fig8}}
\end{figure}

Fig. \ref{fig8} shows the channel capacity bounds versus $\xi $ for different VLC scenarios.
The capacity over the AWGN channel \cite{BIB25} is a constant because the dimming support is not considered in conventional RF wireless communications. Nevertheless, the AWGN channel always achieves the largest capacity.
For the VLC with only signal-independent noise (i.e., $\varsigma ^2=0$) and without peak optical intensity constraint in \cite{BIB17},
the second best capacity bounds are obtained.
Moreover, in this case, the capacity bounds increase with $\xi$.
For the indoor VLC with $\varsigma ^2=0$ and $A=P$ in \cite{BIB18},
the capacity bounds increase and then decrease with $\xi $,
and the capacity bound curves are symmetrical with $\xi  = 0.5$.
For the VLC with signal-dependent noise,
the capacity bounds increase with $\xi$.
As seen, the curves with $A=P=45 \;{\rm dB}$ and $\varsigma ^2=1.5$ have the worst capacity performance.

\section{Conclusions}
\label{section6}
This paper investigated the channel capacity bounds for indoor VLC with signal-dependent Gaussian noise.
The chief conclusions of this paper are summarized below:
\begin{itemize}
\item For the considered two scenarios in Sections \ref{section3} and \ref{section4},
capacity lower and upper bounds for VLC are derived, which can be employed to evaluate the performance of VLC efficiently.
\item  In practical VLC operation regime, the derived capacity lower and upper bounds are tight,
and thus we can accurately estimate the channel capacity of VLC.
\item When the optical intensity is large, the peak-to-nominal-optical-intensity ratio has no effect on the capacity performance.
\item The signal-dependent noise in VLC has a strong influence on capacity performance.
When the variance of signal-dependent noise is increased, the capacity performance degrades.
\item Compared to that with peak optical intensity constraint,
the system without considering the peak optical intensity constraint can achieve improved performance.
\end{itemize}

Since the signal-dependent noise plays an important role in VLC system at high SNR, we will explore an experimental VLC platform to validate the proposed signal-dependent noise model in our future research.

\numberwithin{equation}{section}
\appendices
\section{Justification of the Model in (\ref{equ1})}
\label{Appendix_A}
Fig. \ref{fig9} shows the equivalent circuit for the VLC receiver in Fig. \ref{fig1}.
In this figure, there are three kinds of shot noise current sources, which originate from carriers generated by signal photons $\sqrt {\overline {i_{\rm{s}}^2} }$, background photons $\sqrt {\overline {i_{\rm{b}}^2} }$, and dark current $\sqrt {\overline {i_{\rm{d}}^2} }$.
These shot noise current sources can be expressed as \cite{BIB23}
\begin{eqnarray}
\left\{ \begin{array}{l}
 \sqrt{\overline {i_{\rm{s}}^2}}  = \sqrt{2\frac{{{q^2}\eta }}{{h\nu }}XB}\\
 \sqrt{\overline {i_{\rm{b}}^2}}  = \sqrt{2\frac{{{q^2}\eta }}{{h\nu }}{X_{\rm{b}}}B}\\
 \sqrt{\overline {i_{\rm{d}}^2}}  = \sqrt{2q{i_{\rm{d}}}B}\\
 \end{array} \right.,
\label{equ34}
\end{eqnarray}
where $q$ is the electron charge,
$\eta $ is the quantum efficiency,
$\nu$ is the signal frequency,
$h$ is Planck constant,
$B$ is the bandwidth,
and ${X_{\rm{b}}}$ is the background optical intensity.

The bias and load resistances have a thermal noise current source $\sqrt {\overline {i_{{\rm{th}}}^2} }$ as \cite{BIB23}
\begin{eqnarray}
\sqrt{\overline {i_{{\rm{th}}}^2}}  = \sqrt{\frac{{4KTB}}{{{R_{\rm{e}}}}}},
\label{equ35}
\end{eqnarray}
where $K$ is the Boltzmann constant, $T$ is the absolute temperature, ${R_{\rm{e}}}$ is the parallel combination of ${R_{\rm{B}}}$ and ${R_{\rm{L}}}$.

\begin{figure}
\centering
\includegraphics[width=12cm]{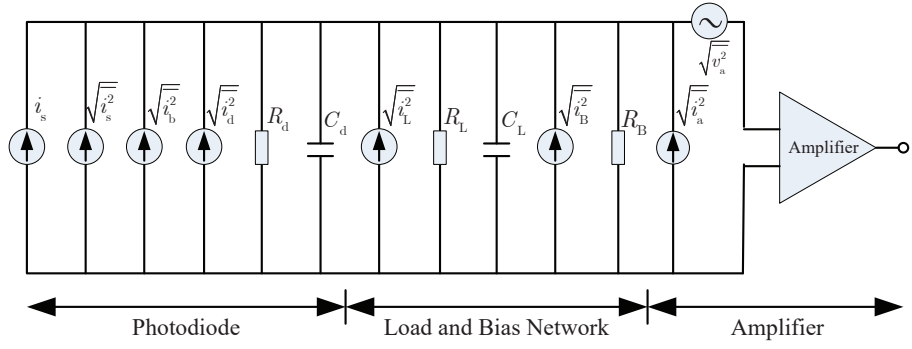}
\caption{The equivalent circuit for the VLC receiver in Fig. \ref{fig1} \label{fig9}}
\end{figure}

Assume that a high impedance amplifier is employed at the receiver, whose first stage is made of a bipolar junction transistor (BJT). The amplifier noise current source $\sqrt {\overline {i_{\rm{a}}^2} }$ is given by \cite{BIB23}
\begin{eqnarray}
\sqrt{\overline {i_{\rm{a}}^2}}  = \sqrt{\frac{{2KTB}}{{{r_{\rm{e}}}}}},
\label{equ36}
\end{eqnarray}
where ${r_{\rm{e}}}$ is the alternating current input resistance of the BJT.

Summarizing (\ref{equ34}), (\ref{equ35}) and (\ref{equ36}) collectively, we can find that $\sqrt {\overline {i_{\rm{s}}^2} }$ depends on $\sqrt X $,
while $\sqrt {\overline {i_{\rm{b}}^2} }$, $\sqrt {\overline {i_{\rm{d}}^2} }$, $\sqrt {\overline {i_{{\rm{th}}}^2} }$ and $\sqrt {\overline {i_{\rm{a}}^2} }$ are all independent of $\sqrt X$. That is, the additive noises can be classified into two types: input-dependent noise and input-independent noise.
Moreover, according to \cite{BIB08}, \cite{BIB10} and \cite{BIB22}, the shot noise, thermal noise, and amplifier noise can be well modelled by Gaussian distributions.
As a result, the channel model in (\ref{equ1}) is justified for practical VLC systems.

\section{Proof of \emph{Theorem \ref{the1}}}
\label{Appendix_B}
Let $f_X(x)$ be the solution to problem (\ref{equ14}),
and then let ${\tilde f}_X(x) = f_X(x) + \varepsilon \eta(x)$ be a perturbation function of $f_X(x)$,
where $\varepsilon $ is a small number and $\eta(x)$ is a differentiable function of $x$.
Note that ${\tilde f}_X(x)$ should also meet the two constraints of problem (\ref{equ14}).
Equivalently, $\eta(x)$ should satisfy
\begin{eqnarray}
\left\{ \begin{array}{l}
\int_0^A {\eta (x){\rm{d}}x}  = 0\\
\int_0^A {x\eta (x){\rm{d}}x}  = 0
\end{array} \right..
\label{equ:A2}
\end{eqnarray}

Define $\rho (\varepsilon ) = {\cal J}\left[ {f_X(x) + \varepsilon \eta (x)} \right]$ as a function of $\varepsilon$.
Because the minimum value of $\rho (\varepsilon )$ is attained when $\varepsilon  = 0$, and thus we obtain
\begin{equation}
{\left. {\frac{{{\rm{d}}\rho (\varepsilon )}}{{{\rm{d}}\varepsilon }}} \right|_{\varepsilon  = 0}} = \int_0^A {\eta (x)\left\{ \ln ( {f_X(x)} ) + 1+ \frac{1}{2}\ln (1 + {\varsigma ^2}x)
 \right\}{\rm{d}}x}  = 0.
\label{equ:A4}
\end{equation}

Combining (\ref{equ:A4}) with (\ref{equ:A2}), we obtain the following equation
\begin{equation}
\ln ( {{f_X}(x)} ) + 1 + \frac{1}{2}\ln (1 + {\varsigma ^2}x) = c+bx,
\label{equ:A6}
\end{equation}
where $c$ and $b$ are arbitrary constants.

\textbf{Case 1:} If $b=0$, the PDF of $X$ is $f_X(x) = \frac{1}{{\sqrt {1 + {\varsigma ^2}x} }}{{e}^{c - 1}},\;0 \le x \le A$.
Substituting the PDF into the first constraint of (\ref{equ14}), we have
\begin{equation}
{e^{c-1}} = \frac{{{\varsigma ^2}}}{{2\left( {\sqrt {1 + {\varsigma ^2}A}  - 1} \right)}}.
\label{equ:A9}
\end{equation}
Therefore, the first equation in (\ref{b1}) holds.
Substituting the first equation in (\ref{b1}) into the second constraint of (\ref{equ14}), one has
\begin{eqnarray}
\xi P = \frac{{{\varsigma ^2}A + \sqrt {1 + {\varsigma ^2}A}  - 1}}{{3{\varsigma ^2}}}.
\label{equ:A11}
\end{eqnarray}

According to the definition of APR, eq. (\ref{equ:A11}) can be equivalently written as
\begin{eqnarray}
\alpha  = \frac{{{\varsigma ^2}A + \sqrt {1 + {\varsigma ^2}A}  - 1}}{{3{\varsigma ^2}A}}.
\label{equ:A12}
\end{eqnarray}

\textbf{Case 2:} If $b\neq0$, the PDF of $X$ is $f_X(x) = \frac{1}{{\sqrt {1 + {\varsigma ^2}x} }}{{e}^{bx+c - 1}},\;0 \le x \le A$.
Substituting the PDF into the first constraint of (\ref{equ14}), we have
\begin{eqnarray}
{e^{c-1}} = \frac{{{\varsigma ^2}}}{{2g\left( b,{\varsigma ^2},A \right)}}.
\label{equ:A13}
\end{eqnarray}
Substituting (\ref{equ:A13}) into the PDF of $X$, the second equation in (\ref{b1}) holds.
Then, substituting the second equation in (\ref{b1}) into the second constraint of (\ref{equ14}), eq. (\ref{equ20}) holds.

\section{Proof of \emph{Theorem \ref{the2}}}
\label{Appendix_C}
\textbf{Case 1:} $\alpha  = \left( {{\varsigma ^2}A + \sqrt {1 + {\varsigma ^2}A}  - 1} \right) / \left( {3{\varsigma ^2}A} \right)$. In this case, $H(X)$ is given by
\begin{eqnarray}
H(X) = \ln \left( {\frac{2}{{{\varsigma ^2}}}\left( {\sqrt {1 + {\varsigma ^2}A}  - 1} \right)} \right) + \frac{1}{2}E_X\left( {{\rm{ln}}\left( {1 + {\varsigma ^2}X} \right)} \right).
\label{equ:C1}
\end{eqnarray}
Substituting (\ref{equ:C1}) into (\ref{equ13}), the lower bound of capacity is obtained as the first equation in (\ref{a1}).

\textbf{Case 2:} $\alpha  \ne \left( {{\varsigma ^2}A + \sqrt {1 + {\varsigma ^2}A}  - 1} \right) / \left( {3{\varsigma ^2}A} \right)$ and $\alpha  \in (0, P/A]$.
In this case, we have
\begin{eqnarray}
H\left( X \right) &=& \!\!\int_0^A {{f_X}\left( x \right)\ln \left( {\frac{{2g\left( {b,{\varsigma ^2},A} \right)}}{{{\varsigma ^2}}}} \right){\rm{d}}x}\!  + \! \frac{1}{2}\int_0^A {{f_X}\left( x \right)\ln \left( {1 + {\varsigma ^2}x} \right){\rm{d}}x} \! -\! b\int_0^A {{f_X}\left( x \right)x\,{\rm{d}}x} \nonumber \\
 &=& \ln \left( {\frac{{2g\left( {b,{\varsigma ^2},A} \right)}}{{{\varsigma ^2}}}} \right) - b\xi P + \frac{1}{2}{E_X}\left( {\ln \left( {1 + {\varsigma ^2}X} \right)} \right).
\label{equ:C3}
\end{eqnarray}
Substituting (\ref{equ:C3}) into (\ref{equ13}), the capacity can be lower-bounded by the second equation in (\ref{a1}).

\section{Proof of \emph{Theorem \ref{the3}}}
\label{Appendix_D}
\textbf{Case 1:} $\alpha  = \left( {{\varsigma ^2}A + \sqrt {1 + {\varsigma ^2}A}  - 1} \right) / \left( {3{\varsigma ^2}A} \right)$.
In this case, substitute (\ref{equ28}) into (\ref{equ27}), we obtain
\begin{eqnarray}
&& \!\!\!\! D\left( {{f_{Y|X}}\left( {y\left| x \right.} \right)\left\| {{R_Y}\left( y \right)} \right.} \right) = - \frac{1}{2}\ln \left( {2\pi e{\sigma ^2}\left( {1 + {\varsigma ^2}X} \right)} \right)\underbrace { - \int_{ - \infty }^0 {{f_{Y|X}}\left( {y\left| x \right.} \right)\ln \left(\frac{{2\beta }}{{\sqrt {2\pi } }}{e^{ - \frac{{{y^2}}}{2}}}\right)} {\rm{d}}y}_{ \buildrel \Delta \over = {D_1}} \nonumber \\
&& \!\!\!\!\underbrace { -\!\!\! \int_0^{A \!+\! A\delta }\!\! \!\!\!\!{{f_{Y|X}}\!( {y\left| x \right.} )\!\ln\!\! \left(\!\!\frac{{\left( {1 - 2\beta } \right){\varsigma ^2}}}{{2\!\!\left(\!\! {\sqrt {1 \!+\! {\varsigma ^2}A( {1 \!+\! \delta } )} \! -\! 1}\! \right)\!\!\sqrt {1 \!+\! {\varsigma ^2}y} }}\!\!\right) } \!{\rm{d}}y}_{ \buildrel \Delta \over = {D_2}} \underbrace { - \int_{A \!+\! A\delta }^{ \infty }\!\! {{f_{Y|X}}\!( {y\left| x \right.})\!\ln\!\! \left(\!\!\frac{\beta {e^{ - y}}}{{{e^{ - \left( {A \!+\! A\delta } \right)}}}}\!\!\right)} \!{\rm{d}}y}_{ \buildrel \Delta \over = {D_3}},
\label{equ:D1}
\end{eqnarray}
where $D_1$ can be expressed as
\begin{eqnarray}
{D_1} \!\!\!\!&=& \!\!\!\!\left[ \!{\!\frac{{\left( {1 \!+\! X{\varsigma ^2}} \right){\sigma ^2} \!+\! {X^2}}}{2} + \ln \frac{{\sqrt {2\pi } }}{{2\beta }}} \!\right]\!\!{\cal Q}\!\!\left(\! {\frac{X}{{\sqrt {\left( {1 \!+\! X{\varsigma ^2}} \right){\sigma ^2}} }}} \!\right)\! - \!\frac{{ X \!\sqrt {\!\left( {1 \!+\! X{\varsigma ^2}} \right){\sigma ^2}} }}{{2\sqrt {2\pi } }}{e^{^{ - \frac{{{X^2}}}{{2\left( {1 \!+\! X{\varsigma ^2}} \right){\sigma ^2}}}}}} \nonumber \\
 &=& \!\!\!\!{o_X}\left( 1 \right),
\label{equ:D2}
\end{eqnarray}

Moreover, $D_2$ in (\ref{equ:D1}) can be expressed as
\begin{eqnarray}
{D_2} &=& \underbrace {\ln \left( {\frac{{2\left( {\sqrt {1 + {\varsigma ^2}A\left( {1 + \delta } \right)}  - 1} \right)}}{{\left( {1 - 2\beta } \right){\varsigma ^2}}}} \right)}_{ \ge 0,\;{\rm{for \;large \;A}}}\underbrace {\int_0^{A\left( {1 + \delta } \right)} {\frac{{{e^{ - \frac{{{{\left( {y - X} \right)}^2}}}{{2\left( {1 + X{\varsigma ^2}} \right){\sigma ^2}}}}}}}{{\sqrt {2\pi \left( {1 + X{\varsigma ^2}} \right){\sigma ^2}} }}{\rm{d}}y} }_{ \le 1} \nonumber \\
 && + \frac{1}{2}\underbrace {\int_0^{A\left( {1 + \delta } \right)} {\frac{{\ln \left( {1 + {\varsigma ^2}y} \right)}}{{\sqrt {2\pi \left( {1 + X{\varsigma ^2}} \right){\sigma ^2}} }}{e^{^{ - \frac{{{{\left( {y - X} \right)}^2}}}{{2\left( {1 + X{\varsigma ^2}} \right){\sigma ^2}}}}}}{\rm{d}}y} }_{ \buildrel \Delta \over = {D_4}} \nonumber \\
 &\le& \ln \left( {\frac{{2\left( {\sqrt {1 + {\varsigma ^2}A\left( {1 + \delta } \right)}  - 1} \right)}}{{\left( {1 - 2\beta } \right){\varsigma ^2}}}} \right) + \frac{1}{2}{D_4},
\label{equ:D3}
\end{eqnarray}
where $\ln \left( 2[ {\sqrt {1 \!+\! \varsigma ^2 A ( 1 + \delta )}  - 1} ] / [{\left( {1 - 2\beta } \right){\varsigma ^2}}] \right) \ge 0$ holds because $A$ is large in VLC \cite{BIB20,BIB21}.

For the term $D_4$ in (\ref{equ:D3}), if $X \ge {1 \mathord{\left/ {\vphantom {1 {{\varsigma ^2}}}} \right. \kern-\nulldelimiterspace} {{\varsigma ^2}}}$, we have
\begin{eqnarray}
{D_4}  &=&\!\! \underbrace {\ln \left( {{\varsigma ^2}X} \right)}_{ \ge 0}\!\!\int_{\frac{{ - X}}{{\sqrt {\left( {1 + X{\varsigma ^2}} \right){\sigma ^2}} }}}^{\frac{{A + A\delta  - X}}{{\sqrt {\left( {1 + X{\varsigma ^2}} \right){\sigma ^2}} }}} {\frac{{{e^{ - \frac{{{h^2}}}{2}}}}}{{\sqrt {2\pi } }}{\rm{d}}h} \!\! + \!\!\frac{X}{{\sqrt {2\pi \left( {1 + X{\varsigma ^2}} \right){\sigma ^2}} }}\!\!\int_{\frac{{\frac{1}{{{\varsigma ^2}}} - X}}{X}}^{\frac{{\frac{1}{{{\varsigma ^2}}} + A + A\delta  - X}}{X}}\!\! {\underbrace {\ln \left( {1 + \tilde y} \right)}_{ \le \tilde y}{e^{ - \frac{{{{\left( {X\tilde y - \frac{1}{{{\varsigma ^2}}}} \right)}^2}}}{{2\left( {1 + X{\varsigma ^2}} \right){\sigma ^2}}}}}{\rm{d}}\tilde y} \nonumber\\
 &\le& \ln \left( {{\varsigma ^2}X} \right) + \frac{X}{{\sqrt {2\pi \left( {1 + X{\varsigma ^2}} \right){\sigma ^2}} }}\int_{\frac{{\frac{1}{{{\varsigma ^2}}} - X}}{X}}^{\frac{{\frac{1}{{{\varsigma ^2}}} + A + A\delta  - X}}{X}} {\tilde y{e^{ - \frac{{{{\left( {X\tilde y - \frac{1}{{{\varsigma ^2}}}} \right)}^2}}}{{2\left( {1 + X{\varsigma ^2}} \right){\sigma ^2}}}}}{\rm{d}}\tilde y} \nonumber\\
 &=& \ln \left( {{\varsigma ^2}X} \right) + \frac{{\sqrt {\left( {1 + X{\varsigma ^2}} \right){\sigma ^2}} }}{{\sqrt {2\pi } X}}\left[ {{e^{ - \frac{{{X^2}}}{{2\left( {1 + X{\varsigma ^2}} \right){\sigma ^2}}}}} - \underbrace {{e^{ - \frac{{{{\left( {A + A\delta  - X} \right)}^2}}}{{2\left( {1 + X{\varsigma ^2}} \right){\sigma ^2}}}}}}_{ \ge 0}} \right] \nonumber \\
&&\; + \frac{1}{{X{\varsigma ^2}}}\underbrace {\left[ {{\cal Q}\left( {\frac{{ - X}}{{\sqrt {\left( {1 + X{\varsigma ^2}} \right){\sigma ^2}} }}} \right) - {\cal Q}\left( {\frac{{A + A\delta  - X}}{{\sqrt {\left( {1 + X{\varsigma ^2}} \right){\sigma ^2}} }}} \right)} \right]}_{ \le 1} \nonumber \\
 &\le& \ln \left( {{\varsigma ^2}X} \right) + \frac{{\sqrt {\left( {1 + X{\varsigma ^2}} \right){\sigma ^2}} }}{{\sqrt {2\pi } X}}{e^{ - \frac{{{X^2}}}{{2\left( {1 + X{\varsigma ^2}} \right){\sigma ^2}}}}} + \frac{1}{{X{\varsigma ^2}}}.
\label{equ:D4}
\end{eqnarray}

If $0 \le X < {1 \mathord{\left/ {\vphantom {1 {{\varsigma ^2}}}} \right. \kern-\nulldelimiterspace} {{\varsigma ^2}}}$, then we can bound $\ln \left( {1 + {\varsigma ^2}y} \right) \le {\varsigma ^2}y$ and obtain
\begin{eqnarray}
{D_4} \!\!\!\!&\le&\!\!\!\! \int_0^{A + A\delta } {\frac{{{\varsigma ^2}y}}{{\sqrt {2\pi \left( {1 + X{\varsigma ^2}} \right){\sigma ^2}} }}{e^{ - \frac{{{{\left( {y - X} \right)}^2}}}{{2\left( {1 + X{\varsigma ^2}} \right){\sigma ^2}}}}}{\rm{d}}y} \nonumber\\
 &=&\!\!\!\! {\varsigma ^2}\!\sqrt {\!\frac{{( {1 \!\!+\!\! X{\varsigma ^2}} ){\sigma ^2}}}{{2\pi }}} \!\!\left[\! {{e^{ - \frac{{{X^2}}}{{2\left( {1 \!+\! X{\varsigma ^2}} \right){\sigma ^2}}}}} \!-\! \underbrace {{e^{ - \frac{{{{\left( {A \!+\! A\delta  \!-\! X} \right)}^2}}}{{2\left( {1 \!+\! X{\varsigma ^2}} \right){\sigma ^2}}}}}}_{ \ge 0}} \!\right]
 \!\!+\!\! {\varsigma ^2}\!X\!\! \!\left[\! {{\cal Q}\!\!\left(\!\! {\frac{{ - X}}{{\sqrt {\!( {1 \!\!+\!\! X{\varsigma ^2}})\!{\sigma ^2}} }}}\!\! \right) \!\!-\!\! \underbrace {{\cal Q}\!\!\left(\!\! {\frac{{A \!+\! A\delta  \!-\! X}}{{\sqrt {\!( {1 \!+\! X{\varsigma ^2}} )\!{\sigma ^2}} }}}\!\! \right)}_{ \ge 0}} \!\right] \nonumber \\
 &\le&\!\!\!\! {\varsigma ^2}\sqrt {\frac{{\left( {1 + X{\varsigma ^2}} \right){\sigma ^2}}}{{2\pi }}} {e^{ - \frac{{{X^2}}}{{2\left( {1 + X{\varsigma ^2}} \right){\sigma ^2}}}}} + {\varsigma ^2}X {\cal Q}\left( {\frac{{ - X}}{{\sqrt {\left( {1 + X{\varsigma ^2}} \right){\sigma ^2}} }}} \right).
  \label{equ:D5}
\end{eqnarray}

Because (\ref{equ:D5}) is bounded and from (\ref{equ:D4}), we obtain
\begin{eqnarray}
{D_4} \le \ln \left( {{\varsigma ^2}X} \right) + \frac{{\sqrt {\left( {1 + X{\varsigma ^2}} \right){\sigma ^2}} }}{{\sqrt {2\pi } X}}{e^{ - \frac{{{X^2}}}{{2\left( {1 + X{\varsigma ^2}} \right){\sigma ^2}}}}} + \frac{1}{{X{\varsigma ^2}}}.
\label{equ:D6}
\end{eqnarray}

Substituting (\ref{equ:D6}) into (\ref{equ:D3}), we have
\begin{equation}
{D_2} \!\!\le\!\! \ln\! \left( {\frac{{2\left( {\sqrt {1 \!+\! {\varsigma ^2}A\left( {1 \!+\! \delta } \right)}  \!-\! 1} \right)}}{{\left( {1 - 2\beta } \right){\varsigma ^2}}}} \!\right) \!+\! \frac{1}{2}\left[ {\ln \left( {{\varsigma ^2}X} \right) \!+\! \frac{{\sqrt {\left( {1 \!+\! X{\varsigma ^2}} \right){\sigma ^2}} }}{{\sqrt {2\pi } X}}{e^{ - \frac{{{X^2}}}{{2\left( {1 \!+\! X{\varsigma ^2}} \right){\sigma ^2}}}}} \!+\! \frac{1}{{X{\varsigma ^2}}}} \right].
\label{equ:D7}
\end{equation}

Furthermore, $D_3$ in (\ref{equ:D1}) can be expressed as
\begin{eqnarray}
{D_3} &=& \left[ {X - \left( {A + A\delta  + \ln \beta } \right)} \right]{\cal Q}\left( {\frac{{A + A\delta  - X}}{{\sqrt {\left( {1 + X{\varsigma ^2}} \right){\sigma ^2}} }}} \right) + \frac{{\sqrt {\left( {1 + X{\varsigma ^2}} \right){\sigma ^2}} }}{{\sqrt {2\pi } }}{e^{ - \frac{{{{\left( {A + A\delta  - X} \right)}^2}}}{{2\left( {1 + X{\varsigma ^2}} \right){\sigma ^2}}}}} \nonumber \\
 &\le&  - \left( {A\delta  + \ln \beta } \right){\cal Q}\left( {\frac{{A + A\delta }}{{\sqrt {\left( {1 + A{\varsigma ^2}} \right){\sigma ^2}} }}} \right) + \frac{{\sqrt {\left( {1 + A{\varsigma ^2}} \right){\sigma ^2}} }}{{\sqrt {2\pi } }}{e^{ - \frac{{{{\left( {A + A\delta } \right)}^2}}}{{2\left( {1 + A{\varsigma ^2}} \right){\sigma ^2}}}}} \nonumber \\
 &=& {o_A}\left( 1 \right).
  \label{equ:D10}
\end{eqnarray}

Substituting (\ref{equ:D2}), (\ref{equ:D7}) and (\ref{equ:D10}) into (\ref{equ:D1}), $D\left( {{f_{Y|X}}\left( {y\left| x \right.} \right)\left\| {{R_Y}\left( y \right)} \right.} \right)$ can be derived.
Then, substituting $D\left( {{f_{Y|X}}\left( {y\left| x \right.} \right)\left\| {{R_Y}\left( y \right)} \right.} \right)$ into (\ref{equ26}), we have
\begin{eqnarray}
\!\!\!C \!\!\!\! &\le&\!\!\!\! \ln \!\!\left( {\frac{{2\left( {\sqrt {1 + {\varsigma ^2}A\left( {1 + \delta } \right)}  - 1} \right)}}{{\left( {1 - 2\beta } \right){\varsigma ^2}}}} \right) \!\!-\! \frac{1}{2}\ln \left( {2\pi e{\sigma ^2}} \right) \!+\! {E_{X:{f_X}\left( x \right) = f_X^*\left( x \right)}}\!\left[ {{o_X}\!\left( 1 \right)} \right]\! +\! {o_A}\!\left( 1 \right).
\label{equ:D12}
\end{eqnarray}

To solve (\ref{equ:D12}) further, the concept of ``capacity-achieving source distributions that escape to infinity" is utilized \cite{BIB22,BIB28,BIB29}.
According to this concept, under the circumstances of the capacity-achieving input PDF ${f_X^*}\left( x \right)$,
when the optical intensity approaches infinity, the probability for any set of finite-intensity input symbols will tend to zero.
Consequently, we obtain
\begin{eqnarray}
{E_{{X:{f_X}\left( x \right) = f_X^*\left( x \right)}}} ( {{o_X}\left( 1 \right)} ) = {o_A}\left( 1 \right).
\label{equ:D13}
\end{eqnarray}
Substitute (\ref{equ:D13}) into (\ref{equ:D12}), we obtain the first equation in (\ref{a2}).

\textbf{Case 2:} $\alpha  \ne {\left( {{\varsigma ^2}A + \sqrt {1 + {\varsigma ^2}A}  - 1} \right)} / \left( {3{\varsigma ^2}A} \right)$ and $\alpha  \in (0,P/A]$. In this case, substitute (\ref{equ29}) into (\ref{equ27}), we have
\begin{eqnarray}
\!\!\!\!\!\!\!\!\!\!\! D\left( {{f_{Y|X}}\left( {y\left| x \right.} \right)\left\| {{R_Y}\left( y \right)} \right.} \right)\!\!&=&\!\!  - \frac{1}{2}\ln \left( {2\pi e{\sigma ^2}\left( {1 + {\varsigma ^2}X} \right)} \right)+D_1+D_3 \nonumber \\
&&\underbrace { - \int_0^{A + A\delta } {{f_{Y|X}}\left( {y\left| x \right.} \right)\ln \left( \frac{{\left( {1 - 2\beta } \right){\varsigma ^2}{e^{by}}}}{{2G\left( {b,{\varsigma ^2},A,\delta } \right)\sqrt {1 + {\varsigma ^2}y} }}\right)} \,{\rm{d}}y}_{ \buildrel \Delta \over = {D_5}} .
\label{equ:D15}
\end{eqnarray}

For the term $D_5$ in (\ref{equ:D15}), we have
\begin{eqnarray}
{D_5} \!\!\!\!&=&\!\!\!\! \ln\! \left(\! {\frac{{2G\!( b,{\varsigma ^2},A,\delta  )}}{{( {1 \!-\! 2\beta } ){\varsigma ^2}}}}\! \right)\!\! \underbrace {\int_0^{A( 1 \!+\! \delta )}\!\! {\frac{{{e^{ - \frac{{{{\left( {y - X} \right)}^2}}}{{2\left( {1 + X{\varsigma ^2}} \right){\sigma ^2}}}}}}}{{\sqrt {2\pi \!(1 \!\!+\!\! X{\varsigma ^2}){\sigma ^2}} }}{\rm{d}}y} }_{ \le 1} + \frac{1}{2}D_4\! +\! \underbrace {\int_0^{A( 1 \!+\! \delta )}\!\!\! {\frac{{ - by}}{{\sqrt {2\pi ( {1 \!\!+\!\! X{\varsigma ^2}}){\sigma ^2}} }}{e^{^{ - \frac{{{{\left( {y - X} \right)}^2}}}{{2( {1 \!+\! X{\varsigma ^2}}){\sigma ^2}}}}}}\!{\rm{d}}y} }_{ \buildrel \Delta \over = {D_6}} \nonumber \\
 &\le& \ln \left( {\frac{{2G\left( {b,{\varsigma ^2},A,\delta } \right)}}{{\left( {1 - 2\beta } \right){\varsigma ^2}}}} \right) + \frac{1}{2}{D_4} + {D_6},
\label{equ:D16}
\end{eqnarray}
where $\ln \left( {2G( {b,{\varsigma ^2},A,\delta } )} /[( {1 - 2\beta }){\varsigma ^2}] \right) \ge 0$ holds because $A$ is large in VLC \cite{BIB20,BIB21}.

For the term ${D_6}$, when $b < 0$, we have
\begin{eqnarray}
{D_6} \!\!\!\! &=&\!\!\!\!  - b\!\! \left[\! \underbrace X_{ \ge 0}\!\!\underbrace {\left[\! {\cal Q}\!\!\left(\!\! {\frac{{ - X}}{{\sqrt {( 1 \!\!+\!\! X{\varsigma ^2} ){\sigma ^2}} }}} \!\!\right) \!\!-\!\! {\cal Q}\!\!\left(\!\! {\frac{{A \!+\! A\delta  \!-\! X}}{{\sqrt {( 1 \!\!+\!\! X{\varsigma ^2}){\sigma ^2}} }}} \!\!\right)\!\! \right]}_{ \le 1} \!+\! \frac{{\sqrt {( 1 \!\!+\!\! X{\varsigma ^2}){\sigma ^2}} }}{{\sqrt {2\pi } }}\!\!\!\left(\!\! {{e^{ - \frac{{{X^2}}}{{2( {1 \!+\! X{\varsigma ^2}}){\sigma ^2}}}}} \!-\! \underbrace {{e^{\! - \frac{{{{( A \!+\! A\delta  \!-\! X )}^2}}}{{2( {1 \!+\! X{\varsigma ^2}}){\sigma ^2}}}}}}_{ \ge 0}}\!\! \right)\!\! \right] \nonumber \\
 &\le&\!\!\!\!  - bX - b\frac{{\sqrt {\left( {1 + A{\varsigma ^2}} \right){\sigma ^2}} }}{{\sqrt {2\pi } }}{e^{ - \frac{{{A^2}}}{{2\left( {1 + A{\varsigma ^2}} \right){\sigma ^2}}}}}.
\label{equ:D17}
\end{eqnarray}

When $b > 0$, we have
\begin{eqnarray}
{D_6} \!\!\!\!&=& \!\!\!\! \left| { - b} \right|\left[ \frac{{\sqrt {\left( {1 + X{\varsigma ^2}} \right){\sigma ^2}} }}{{\sqrt {2\pi } }}\left( {{e^{ - \frac{{{{\left( {A + A\delta  - X} \right)}^2}}}{{2\left( {1 + X{\varsigma ^2}} \right){\sigma ^2}}}}} - \underbrace {{e^{ - \frac{{{X^2}}}{{2\left( {1 + X{\varsigma ^2}} \right){\sigma ^2}}}}}}_{ \ge 0}} \right) \right. \nonumber \\
 && \!\!\!\! \left.- \underbrace X_{ \ge 0}\left( {\underbrace {{\cal Q}\left( {\frac{{ - X}}{{\sqrt {\left( {1 + X{\varsigma ^2}} \right){\sigma ^2}} }}} \right) -{\cal Q}\left( {\frac{{A + A\delta  - X}}{{\sqrt {\left( {1 + X{\varsigma ^2}} \right){\sigma ^2}} }}} \right)}_{ > 0}} \right) \right] \nonumber \\
 &\le&\!\!\!\! b\frac{{\sqrt {\left( {1 \!+\! A{\varsigma ^2}} \right){\sigma ^2}} }}{{\sqrt {2\pi } }}{e^{ - \frac{{{{\left( {A\delta } \right)}^2}}}{{2\left( {1 + A{\varsigma ^2}} \right){\sigma ^2}}}}}\!\! -\! bX\!\!\left[ {{\cal Q}\!\left( \!{\frac{{ - X}}{{\sqrt {\left( {1 \!+\! X{\varsigma ^2}} \right){\sigma ^2}} }}} \!\right)\!\! -\! {\cal Q}\!\left(\! {\frac{{A + A\delta  - X}}{{\sqrt {\left( {1 \!+\! X{\varsigma ^2}} \right){\sigma ^2}} }}} \!\right)} \!\right].
\label{equ:D18}
\end{eqnarray}

Define
\begin{eqnarray}
\varphi \!\left( {b,\!{\varsigma ^2},\!A,\!{\sigma ^2},\!X} \right)\!\! \buildrel \Delta \over = \!\!\left\{ {\begin{array}{*{20}{c}}\!\!\!
{ - bX - b\frac{{\sqrt {\left( {1 + A{\varsigma ^2}} \right){\sigma ^2}} }}{{\sqrt {2\pi } }}{e^{ - \frac{{{A^2}}}{{2\left( {1 + A{\varsigma ^2}} \right){\sigma ^2}}}}} ,\; \; \; \; \; b \!<\! 0}\\
{\!\!\!\!b\frac{{\sqrt {\left( {1 \!+\! A{\varsigma ^2}} \right){\sigma ^2}} }}{{\sqrt {2\pi } }}{e^{ - \frac{{{{\left( {A\delta } \right)}^2}}}{{2\left( {1 \!+\! A{\varsigma ^2}} \right){\sigma ^2}}}}} \!\!-\! bX\!\!\left[ {{\cal Q}\!\!\left( \!{\frac{{ - X}}{{\sqrt {\left( {1 \!+\! X{\varsigma ^2}} \right){\sigma ^2}} }}} \!\right)\!\! -\! {\cal Q}\!\!\left(\! {\frac{{A + A\delta  - X}}{{\sqrt {\left( {1 \!+\! X{\varsigma ^2}} \right){\sigma ^2}} }}} \!\right)} \!\right]\!\!,\! b \!>\! 0}
\end{array}} \right..
\label{equ:D19}
\end{eqnarray}
Combine (\ref{equ:D17}) with (\ref{equ:D18}), $D_6$ can be further expressed as
\begin{eqnarray}
{D_6} \le \varphi \left( {b,{\varsigma ^2},A,{\sigma ^2},X} \right).
\label{equ:D20}
\end{eqnarray}

Then, substitute (\ref{equ:D6}) and (\ref{equ:D20}) into (\ref{equ:D16}), $D_5$ can be written as
\begin{eqnarray}
{D_5} \!\!\le\!\! \ln\!\! \left(\! {\frac{{2G( {b,{\varsigma ^2},A,\delta } )}}{{\left( {1 \!-\! 2\beta } \right){\varsigma ^2}}}} \!\right) \!+\! \frac{1}{2}\!\!\left[\! {\ln ( {{\varsigma ^2}X} ) \!+\! \frac{{\sqrt {\!( {1 \!\!+\!\! X{\varsigma ^2}} ){\sigma ^2}} }}{{\sqrt {2\pi } X}}{e^{ - \frac{{{X^2}}}{{2( {1 \!+\! X{\varsigma ^2}}){\sigma ^2}}}}} \!\!+\!\! \frac{1}{{X{\varsigma ^2}}}}\!\! \right]  \!\!+\!\! \varphi \!( {b,{\varsigma ^2},A,{\sigma ^2},X} ).
\label{equ:D21}
\end{eqnarray}

Substituting (\ref{equ:D21}) into (\ref{equ:D15}), we can derive $D(\! {{f_{Y|X}}( {y\left| x \right.} )\!\left\| {{R_Y}\!( y)} \right.} \!)$.
Then, substituting the derived relative entropy $D(\! {{f_{Y|X}}( {y\left| x \right.} )\!\left\| {{R_Y}\!( y)} \right.} \!)$ into (\ref{equ:D18}), we have
\begin{eqnarray}
C  \!\!\!\!&\le& \!\!\!\! \ln\!\! \left(\! {\frac{{2G\!( {b,\!{\varsigma ^2}\!,A,\delta\! })}}{{\left( {1 \!-\! 2\beta } \right){\varsigma ^2}}}}\! \right) \!\!-\!\! \frac{1}{2}\!\ln\! ( \!{2\pi e{\sigma ^2}}\! ) \!+\! \psi\!( {b,\!{\varsigma ^2}\!,A,{\sigma ^2}\!,\xi ,\!P} ) \!\!+\! E_{X\!:{f_X}\!( x) = f_X^*\left( x \right)}[o_X\!(1)]\!\!+\! {o_A}\!( 1),
\label{equ:D23}
\end{eqnarray}
where $\psi \left( {b,{\varsigma ^2},A,{\sigma ^2},\xi ,P} \right)$ is given by
\begin{eqnarray}
&&\!\!\!\!\!\!\psi \left( {b,{\varsigma ^2},A,{\sigma ^2},\xi ,P} \right)\buildrel \Delta \over = {E_{X:{f_X}\left( x \right) = f_X^*\left( x \right)}}\left[ {\varphi \left( {b,{\varsigma ^2},A,{\sigma ^2},X} \right)} \right] \nonumber \\
 &&=\!\! \left\{ {\begin{array}{*{20}{c}}
\!\!\!\!{ - b\frac{{\sqrt {\left( {1 + A{\varsigma ^2}} \right){\sigma ^2}} }}{{\sqrt {2\pi } }}{e^{ - \frac{{{A^2}}}{{2\left( {1 + A{\varsigma ^2}} \right){\sigma ^2}}}}} - b\xi P,\;\;\;\;b < 0}\\
\!\!\!\!{b\frac{{\sqrt {\left( {1 + A{\varsigma ^2}} \right){\sigma ^2}} }}{{\sqrt {2\pi } }}{e^{ - \frac{{{{\left( {A\delta } \right)}^2}}}{{2\left( {1 + A{\varsigma ^2}} \right){\sigma ^2}}}}}\!\!-\!b\xi P\!\!\left[ {{\cal Q}\!\!\left(\!\! {\frac{{ - \xi P}}{{\sqrt {\left( {1 + \xi P{\varsigma ^2}} \right){\sigma ^2}} }}} \!\!\right) \!\!-\! {\cal Q}\!\!\left(\!\! {\frac{{A + A\delta  - \xi P}}{{\sqrt {\left( {1 + \xi P{\varsigma ^2}} \right){\sigma ^2}} }}}\!\! \right)} \right], b > 0}
\end{array}} \right..
\label{equ:D24}
\end{eqnarray}
After that, substitute (\ref{equ:D13}) into (\ref{equ:D23}), the second equation in (\ref{a2}) holds.

\section{Proof of \emph{Corollary \ref{coro1} }}
\label{appd}
\textbf{Case 1:} $\alpha  = \left( {{\varsigma ^2}A + \sqrt {1 + {\varsigma ^2}A}  - 1} \right) / \left( {3{\varsigma ^2}A} \right)$. By using \emph{Theorem \ref{the4}} and the L'hospital's rule, we have
\begin{eqnarray}
\mathop {\lim }\limits_{A \to \infty } {C_{{\rm{gap}}}}  &=& \ln \left( {\frac{{\sqrt {1 + \delta } }}{{1 - 2\beta }}} \right) - \mathop {\lim }\limits_{A \to \infty } {f_{{\rm{low}}}}\left( {\xi P} \right).
 \label{appd1}
\end{eqnarray}
Owing to $\alpha A = \xi P$, we have
$
\mathop {\lim }\limits_{A \to \infty } {f_{{\rm{low}}}}(\xi P)  = 0
$. Then, the first equation in \emph{Corollary \ref{cor1}} holds.

\textbf{Case 2:} $\alpha  \ne \left( {{\varsigma ^2}A + \sqrt {1 + {\varsigma ^2}A}  - 1} \right) / \left( {3{\varsigma ^2}A} \right)$ and $\alpha \in (0,P/A]$. In this case, we have
\begin{eqnarray}
\mathop {\lim }\limits_{A \to \infty } {C_{{\rm{gap}}}} \!\!\!\! &=& \ln \left( {\frac{1}{{1 - 2\beta }}} \right) + \mathop {\lim }\limits_{A \to \infty } \left[ {\psi \left( {b,{\varsigma ^2},A,\xi ,P} \right) + b\xi P} \right] - \mathop {\lim }\limits_{A \to \infty } {f_{{\rm{low}}}}\left( {\xi P} \right).
\label{appd3}
\end{eqnarray}
Moreover, when $b<0$, we have
\begin{eqnarray}
\mathop {\lim }\limits_{A \to \infty } \left[ {\psi \left( {b,{\varsigma ^2},A,\xi ,P} \right) + b\xi P} \right] = \mathop {\lim }\limits_{A \to \infty } \left[ { - b\frac{{\sqrt {\left( {1 + A{\varsigma ^2}} \right){\sigma ^2}} }}{{\sqrt {2\pi } }}{e^{ - \frac{{{A^2}}}{{2\left( {1 + A{\varsigma ^2}} \right){\sigma ^2}}}}}} \right] = 0.
\label{appd4}
\end{eqnarray}
When $b \geq0$, we have
\begin{eqnarray}
\lim_{A \to \infty } [ {\psi \left( {b,{\varsigma ^2},A,\xi ,P} \right) + b\xi P} ] \!\!\!\!&=&\!\!\!\! \lim_{A \to \infty } \left\{ b\frac{{\sqrt {\left( {1 + A{\varsigma ^2}} \right){\sigma ^2}} }}{{\sqrt {2\pi } }}{e^{ - \frac{{{{\left( {A\delta } \right)}^2}}}{{2\left( {1 + A{\varsigma ^2}} \right){\sigma ^2}}}}} \right. \nonumber\\
&+&\!\!\!\! \left. b\xi P\!\left[ {1 \!+\! {\cal Q}\!\left(\! {\frac{{A \!+\! A\delta  \!-\! \xi P}}{{\sqrt {\left( {1 \!+\! \xi P{\varsigma ^2}} \right){\sigma ^2}} }}} \!\right) \!-\! {\cal Q}\!\left(\! {\frac{{ - \xi P}}{{\sqrt {\left( {1 \!+\! \xi P{\varsigma ^2}} \right){\sigma ^2}} }}} \!\right)}\! \right]\! \right\} \nonumber \\
 &=& 0.
\label{appd5}
\end{eqnarray}
Substituting (\ref{appd4}) and (\ref{appd5}) into (\ref{appd3}), we derive the second equation in \emph{Corollary \ref{cor1}}.

\section{Proof of \emph{Theorem \ref{the5}}}
\label{Appendix_E}
Let ${f_X}\left( x \right)$ be the solution to problem (\ref{equ37}) and ${\tilde f_X}\left( x \right) = {f_X}\left( x \right) + \varepsilon \eta \left( x \right)$ be a perturbation function of ${f_X}\left( x \right)$, which satisfies the two constraints in (\ref{equ37}), so $\eta \left( x \right)$ should satisfy
\begin{eqnarray}
\left\{ \begin{array}{l}
\int_0^\infty  {\eta \left( x \right){\rm{d}}x}  = 0\\
\int_0^\infty  {x\eta \left( x \right){\rm{d}}x}  = 0
\end{array} \right..
\label{equ:E4}
\end{eqnarray}

Define a function $\rho \left( \varepsilon  \right) = {\cal J}\left[ {{f_X}\left( x \right) + \varepsilon \eta \left( x \right)} \right]$. Referring to Appendix \ref{Appendix_B}, we have
\begin{eqnarray}
\left. {\frac{{{\rm d}\rho \left( \varepsilon  \right)}}{{{\rm d}\varepsilon }}}\right| _{\varepsilon  = 0}  = \int_0^\infty  {\eta \left( x \right)\left\{ {\ln ( {{f_X}\left( x \right)} ) + 1 + \frac{1}{2}\ln \left( {1 + {\varsigma ^2}x} \right)} \right\}{\rm{d}}x}  = 0.
\label{equ:E6}
\end{eqnarray}

Taking the two constraints in (\ref{equ:E4}) into account, we obtain the following equation
\begin{eqnarray}
\ln ( {{f_X}\left( x \right)} ) + 1 + \frac{1}{2}\ln \left( {1 + {\varsigma ^2}x} \right) =  -m - nx,
\label{equ:E7}
\end{eqnarray}
where $m$ and $n$ are arbitrary constants.

\textbf{Case 1:} If $n=0$, the input PDF is expressed as ${f_X}( x ) = \frac{1}{{\sqrt {1 + x{\varsigma ^2}} }}{e^{ -m - 1}},\; x>0$.
Substituting the input PDF into the first constraint of problem (\ref{equ37}), we have
\begin{eqnarray}
{e^{-m - 1}}\frac{2}{{{\varsigma ^2}}}\left(\left. {\sqrt {1 + {\varsigma ^2}x} }\; \right|_0^\infty \right) = 1.
\end{eqnarray}
This indicates that the above equality has no solution.

\textbf{Case 2:} If $n<0$, the input PDF is expressed as ${f_X}\left( x \right) = \frac{1}{{\sqrt {1 + x{\varsigma ^2}} }}{e^{- m - 1 - nx}},\; x>0, n<0$.
Substituting the input PDF into the first constraint of problem (\ref{equ37}), we have
\begin{eqnarray}
\int_0^\infty  {{f_X}(x){\rm{d}}x}
 = \frac{{2{e^{-m - 1}}}}{{{\varsigma ^2}}}\int_0^\infty  {{e^{-n\frac{{{t^2} - 1}}{{{\varsigma ^2}}}}}{\rm{d}}t}  = 1.
\end{eqnarray}
This indicates that the above equality has no solution.

\textbf{Case 3:} If $n>0$, the input PDF is expressed as ${f_X}\left( x \right) = \frac{1}{{\sqrt {1 + x{\varsigma ^2}} }}{e^{- m - 1 - nx}},\; x>0, n>0$.
Substituting the input PDF into the first constraint of problem (\ref{equ37}), we have
\begin{eqnarray}
\int_0^\infty  {{f_X}\left( x \right){\rm{d}}x}  =\frac{2}{{{\varsigma ^2}}}{e^{ -m - 1 + \frac{n}{{{\varsigma ^2}}}}}\sqrt {\frac{{\pi {\varsigma ^2}}}{n}} {\cal Q}\left( {\sqrt {\frac{{2n}}{{{\varsigma ^2}}}} } \right) = 1.
\label{equ:E9}
\end{eqnarray}

Moreover, substituting the input PDF into the second constraint of (\ref{equ37}), we can obtain
\begin{eqnarray}
\int_0^\infty \!\! {x{f_X}\!( x){\rm{d}}x}  \!\!=\!\! \frac{{e^{ - m - 1 + \frac{n}{{{\varsigma ^2}}}}}}{{{\varsigma ^2}n}}\!\left[\! {{e^{\frac{{ - n}}{{{\varsigma ^2}}}}} \!+\! \sqrt {\frac{{\pi {\varsigma ^2}}}{n}} {\cal Q}\left( {\sqrt {\frac{{2n}}{{{\varsigma ^2}}}} } \right)} \!\right]\! -\! \frac{2{e^{ - m - 1 + \frac{n}{{{\varsigma ^2}}}}}}{{{\varsigma ^4}}}\sqrt {\frac{{\pi {\varsigma ^2}}}{n}} {\cal Q}\!\left( \!{\sqrt {\frac{{2n}}{{{\varsigma ^2}}}} } \right).
\label{equ:E10}
\end{eqnarray}
According to (\ref{equ:E9}), eq. (\ref{equ:E10}) is further simplified as
\begin{eqnarray}
\int_0^\infty  {x{f_X}\left( x \right){\rm{d}}x}  = \frac{1}{{{\varsigma ^2}n}}{e^{ - m - 1}} + \frac{1}{{2n}} - \frac{1}{{{\varsigma ^2}}} = \xi P.
\end{eqnarray}
Hence, combining all three cases, \emph{Theorem \ref{the5}} holds.

\section{Proof of \emph{Theorem \ref{the6}}}
\label{Appendix_F}
Substituting the input PDF (\ref{equ38}) into (\ref{equ13}), we write the lower bound on channel capacity as
\begin{eqnarray}
C &\ge&  - \int_0^\infty  {{f_X}\left( x \right)\ln \left( {\frac{1}{{\sqrt {1 + x{\varsigma ^2}} }}{e^{ - m - 1 - nx}}} \right){\rm{d}}x} \nonumber \\
 && - \frac{1}{2}\int_0^\infty  {{f_X}\left( x \right)\ln \left( {1 + x{\varsigma ^2}} \right){\rm{d}}x}  + {f_{{\rm{low}}}}\left( {\xi P} \right) - \frac{1}{2}\ln \left( {2\pi e{\sigma ^2}} \right) \nonumber \\
 &=& \left( {1 + m} \right)\int_0^\infty  {{f_X}\left( x \right){\rm{d}}x}  + n\int_0^\infty  {x{f_X}\left( x \right){\rm{d}}x}  + {f_{{\rm{low}}}}\left( {\xi P} \right) - \frac{1}{2}\ln \left( {2\pi e{\sigma ^2}} \right).
\label{equ:F1}
\end{eqnarray}
Substituting the constraints of (\ref{equ37}) into (\ref{equ:F1}), \emph{Theorem \ref{the6}} holds.

\section{Proof of \emph{Theorem \ref{the7}}}
\label{Appendix_G}
Substituting (\ref{equ43}) into (\ref{equ27}), we have
\begin{eqnarray}
D\left( {{f_{Y|X}}\left( {y\left| X \right.} \right)\left\| {{R_Y}\left( y \right)} \right.} \right) \!\!&=& \!\! - \frac{1}{2}\ln \!\left( {2\pi e{\sigma ^2}\left( {1 + {\varsigma ^2}X} \right)} \right)+D_1 \nonumber \\
&&\underbrace { \!\!-\!\! \int_0^\infty \!\! {{f_{Y|X}}\!\left( {y\left| X \right.} \!\right)\ln \frac{{\left( {1 - \beta } \right){\varsigma ^2}{e^{ - ny}}}}{{2\sqrt {\frac{{\pi {\varsigma ^2}}}{n}} {e^{\frac{n}{{{\varsigma ^2}}}}}{\cal Q}\left( {\sqrt {\frac{{2n}}{{{\varsigma ^2}}}} } \right)\sqrt {1 + {\varsigma ^2}y} }}} \,\!{\rm{d}}y}_{ \buildrel \Delta \over = {D_7}}.
\label{equ:G1}
\end{eqnarray}

From (\ref{equ:D2}), we know that ${D_1} = {o_X}\left( 1 \right)$.
For ${D_7}$, we have
\begin{eqnarray}
{D_7} \!\!\!\! &=&\!\!\!\! \ln\!\! \left(\! {\frac{{2\sqrt {\!\frac{{\pi {\varsigma ^2}}}{n}} {e^{\frac{n}{{{\varsigma ^2}}}}}{\cal Q}\!\!\left(\! {\sqrt {\!\frac{{2n}}{{{\varsigma ^2}}}} } \right)}}{{\left( {1 \!-\! \beta } \right){\varsigma ^2}}}} \!\!\right) \!\!\underbrace {\int_0^\infty \!\!\! {{f_{Y|X}}( {y\left| X \right.}\! )} {\rm{d}}y}_{ \le 1} \! + \frac{1}{2}\!\underbrace {\int_0^\infty \!\!\! {\ln ( {1 \!\!+\!\! {\varsigma ^2}y} ){f_{Y|X}}( {y\left| X \right.} \!)} {\rm{d}}y}_{{\triangleq D_8}}+ n\!\!\underbrace {\int_0^\infty  \!\!\!\!{y{f_{Y|X}}\!( {y\left| X \right.}\! )} {\rm{d}}y}_{{\triangleq D_9}} \nonumber \\
 &\le& \ln \left( {\frac{{2\sqrt {\frac{{\pi {\varsigma ^2}}}{n}} {e^{\frac{n}{{{\varsigma ^2}}}}}{\cal Q}\left( {\sqrt {\frac{{2n}}{{{\varsigma ^2}}}} } \right)}}{{\left( {1 - \beta } \right){\varsigma ^2}}}} \right) + \frac{1}{2}\left[ {\ln \left( {{\varsigma ^2}X} \right) + \frac{{\sqrt {\left( {1 + X{\varsigma ^2}} \right){\sigma ^2}} }}{{\sqrt {2\pi } X}}{e^{ - \frac{{{X^2}}}{{2\left( {1 + X{\varsigma ^2}} \right){\sigma ^2}}}}} + \frac{1}{{X{\varsigma ^2}}}} \right] \nonumber \\
 &&+ n\left[ {X + \frac{{\sqrt {\left( {1 + X{\varsigma ^2}} \right){\sigma ^2}} }}{{\sqrt {2\pi } }}{e^{ - \frac{{{X^2}}}{{2\left( {1 + X{\varsigma ^2}} \right){\sigma ^2}}}}}} \right],
\label{equ:G2}
\end{eqnarray}
where ${D_8}$ and ${D_9}$ can be obtained by referring to the derivations of ${D_4}$ and ${D_6}$.

Substituting (\ref{equ:D2}) and (\ref{equ:G2}) into (\ref{equ:G1}), the relative entropy is derived.
Then, substitute (\ref{equ:G1}) into (\ref{equ26}), we get
\begin{eqnarray}
C \le- \frac{1}{2}\ln \left( {2\pi e{\sigma ^2}} \right) + \ln \left( {\frac{{2\sqrt {\frac{{\pi {\varsigma ^2}}}{n}} {e^{\frac{n}{{{\varsigma ^2}}}}}{\cal Q}\left( {\sqrt {\frac{{2n}}{{{\varsigma ^2}}}} } \right)}}{{\left( {1 - \beta } \right){\varsigma ^2}}}} \right) + {E_{X:{f_X}\left( x \right) = f_X^*\left( x \right)}}\left( {{o_X}\left( 1 \right) + nX} \right).
\label{equ:G3}
\end{eqnarray}
Moreover, we can get ${E_{X:{f_X}\left( x \right) = f_X^*\left( x \right)}} (o_X( 1 ))=o_P(1) $ and ${E_{X:{f_X}\left( x \right) = f_X^*\left( x \right)}}( nX )=n \xi P$. From (\ref{equ39}), \emph{Theorem \ref{the7}} can be finally derived.


\begin{thebibliography}{1}
\baselineskip=5.6mm
\bibitem{BIB21_1}
J.-Y. Wang, J.-B. Wang, M. Chen, and J. Wang, ``Capacity bounds for dimmable visible light communications using PIN photodiodes with input-dependent Gaussian noise," in \emph{IEEE Global Commun. Conf.}, Austin, TX, USA, 2014, pp. 2066-2071.

\bibitem{BIB01}
A. G. Bell, ``The photophone," \emph{Science}, vol. 1, no. 11, pp. 130-134, Sep. 1880.

\bibitem{BIB02}
J.-Y. Wang, C. Liu, J.-B. Wang, Y. Wu, M. Lin, and J. Cheng, ``Physical-layer security for indoor visible light communications: Secrecy capacity analysis," \emph{IEEE Trans. Commun.}, vol. 66, no. 12, pp. 6423-6436, Dec. 2018.

\bibitem{BIB03}
S. Ma, H. Li, Y. He, R. Yang, S. Lu, W. Cao, and S. Li, ``Capacity bounds and interference management for interference channel in visible light communication networks," \emph{IEEE Trans. Wirel. Commun.}, vol. 18, no. 1, pp. 182-193, Jan. 2019.

\bibitem{BIB04}
L. Jia, F. Shu, N. Huang, M. Chen, and J. Wang, ``Capacity and optimum signal constellations for VLC systems," \emph{J. Lightwave Technol.}, vol. 38, no. 8, pp. 2180-2189, Apr. 2020.

\bibitem{BIB15}
B. Li, W. Xu, S. Feng, and Z. Li, ``Spectral-efficient reconstructed LACO-OFDM transmission for dimming compatiable visible light communications," \emph{IEEE Photon. J.}, vol. 11, no. 1, pp. 1-14, Feb. 2019.

\bibitem{BIB16}
K.-I. Ahn and J. K. Kwon, ``Capacity analysis of M-PAM inverse source coding in visible light communication," \emph{J. Lightwave Technol.}, vol. 30, no. 10, pp. 1399-1404, May 2012.

\bibitem{BIB20}
J. Grubor, S. Randel, K.-D. Langer, and J. Walewski, ``Broadband information broadcasting using LED-based interior lighting," \emph{J. Lightwave Technol.}, vol. 26, no. 24, pp. 3883-3892, Dec. 2008.

\bibitem{BIB21}
L. Hanzo, H. Haas, S. Imre, D. O'Brien, M. Rupp, and L. Gyongyosi, ``Wireless myths, realities, and futures: From 3G/4G to optical and quantum wireless," \emph{Proc. IEEE}, vol. 100, no. Special Centennial Issue, pp. 1853-1888, May 2012.


\bibitem{BIB05}
S. M. Hass and J. H. Shaprio, ``Capacity of wireless optical communications," \emph{IEEE J. Sel. Areas Commun.}, vol. 21, no. 8, pp. 1346-1357, Oct. 2003.

\bibitem{BIB06}
K. Chakraborty and P. Narayan, ``The Poisson fading channel," \emph{IEEE Trans. Inf. Theory}, vol. 53, no. 7, pp. 2349-2364, July 2007.

\bibitem{BIB07}
K. Chakraborty, S. Dey, and M. Franceschetti, ``Outage capacity of MIMO Poisson fading channels," \emph{IEEE Trans. Inf. Theory}, vol. 54, no. 11, pp. 4887-4907, Nov. 2008.

\bibitem{BIB08}
A. A. Farid and S. Hranilovic, ``Capacity bounds for wireless optical intensity channels with Gaussian noise," \emph{IEEE Trans. Inf. Theory}, vol. 56, no. 12, pp. 6066-6077, Dec. 2010.

\bibitem{BIB09}
A. A. Farid and S. Hranilovic, ``Channel capacity and non-uniform signaling for free-space optical intensity channels," \emph{IEEE J. Sel. Areas Commun.}, vol. 27, no. 9, pp. 1553-1363, Dec. 2009.

\bibitem{BIB10}
A. Lapidoth, S. M. Moser, and M. A. Wigger, ``On the capacity of free-space optical intensity channels," \emph{IEEE Trans. Inf. Theory}, vol. 55, no. 10, pp. 4449-4461, Oct. 2009.

\bibitem{BIB11}
S. Hranilovic and F. R. Kschischang, ``Capacity bounds for power- and band-limited optical intensity channel corrupted by Gaussian noise," \emph{IEEE Trans. Inf. Theory}, vol. 50, no. 5, pp. 784-795, May 2004.

\bibitem{BIB12}
M. Katz and S. Shamai (Shitz), ``On the capacity-achieving distribution of the discrete-time noncoherent and partially coherent AWGN channels," \emph{IEEE Trans. Inf. Theory}, vol. 50, no. 10, pp. 2257-2270, Oct. 2004.

\bibitem{BIB13}
X. Li, J. Vucic, V. Jungnickel, and J. Armstrong, ``On the capacity of intensity-modulated direct-detection systems and the information rate of ACO-OFDM for indoor optical wireless applications," \emph{IEEE Trans. Commun.}, vol. 60, no. 3, pp. 799-809, Mar. 2012.

\bibitem{BIB14}
R. You and J. M. Kahn, ``Upper-bounding the capacity of optical IM/DD channels with multiple-subcarrier modulation and fixed bias using trigonometric moment space method," \emph{IEEE Trans. Inf. Theory}, vol. 48, no. 2, pp. 514-523, Feb. 2002.

\bibitem{BIB17}
J.-B. Wang, Q.-S. Hu, J. Wang, M. Chen, Y.-H. Huang, and J.-Y. Wang, ``Capacity analysis for dimmable visible light communications," in \emph{IEEE Int. Conf. Commun.}, Sydney, Australia, 2014, pp. 3337-3341.

\bibitem{BIB18}
J.-B. Wang, Q.-S. Hu, J. Wang, M. Chen, and J.-Y. Wang, ``Tight bound on channel capacity for dimmable visible light communications," \emph{J. Lightwave Technol.}, vol. 31, no. 23, pp. 3771-3779, Dec. 2013.

\bibitem{BIB19}
J.-Y. Wang, J.-B. Wang, N. Huang, and M. Chen, ``Capacity analysis for pulse amplitude modulated visible light communications with dimming control," \emph{J. Opt. Soc. Am. A}, vol. 31, no. 3, pp. 561-568, Mar. 2014.

\bibitem{BIB19_1}
X. Bao, X. Zhu, T. Song, and Y. Ou, ``Protocol design and capacity analysis in hybrid network of visible light communication and OFDMA systems," \emph{IEEE Trans. Veh. Technol.}, vol. 63, no. 4, pp. 1770-1778, May 2014.


\bibitem{BIB22}
S. M. Moser, ``Capacity results of an optical intensity channel with input-dependent Gaussian noise," \emph{IEEE Trans. Inf. Theory}, vol. 58, no. 1, pp. 207-223, Jan. 2012.

\bibitem{add00}
M. Soltani and Z. Rezki, ``Optical wiretap channel with input-dependent Gaussian noise under peak- and average-intensity constraints,"\emph{ IEEE Trans. Inf. Theory}, vol. 64, no. 10, pp. 6878-6893, Oct. 2018.

\bibitem{add01}
J.-Y. Wang, Z. Yang, Y. Wang, and M. Chen, ``On the performance of spatial modulation-based optical wireless communications," \emph{IEEE Photon. Technol. Lett.}, vol. 28, no. 19, pp. 2094-2097, Oct. 2016.

\bibitem{add1}
Q. Gao, S. Hu, C. Gong, and Z. Xu, ``Modulation designs for visible light communications with signal-dependent noise," \emph{J. Lightwave Technol.}, vol. 34, no. 23, pp. 5516-5525, Dec. 2016.

\bibitem{add2}
Q. Gao, C. Gong, and Z. Xu, ``Joint transceiver and offset design for visible light communications with input-dependent shot noise," \emph{IEEE Trans. Wirel. Commun.}, vol. 16, no. 5, pp. 2736-2747, May 2017.


\bibitem{BIB23}
J. Katz, ``Detectors for optical communications: A review," \emph{TDA Progress Rep.}, vol. 42, no. 75, pp. 21-38, Nov. 1983.

\bibitem{BIB25}
T. Cover and J. Thomas, \emph{Elements of Information Theory} (2nd ed.), Hoboken: Wiley, 2006.

\bibitem{BIB24}
J. M. Kahn and J. R. Barry, ``Wireless infrared communications," \emph{Proc. IEEE}, vol. 85, no. 2, pp. 265-298, Feb. 1997.


\bibitem{BIB25_1}
M. Reznikov, \emph{Variational Method}, New York: LAP Lambert Academic Publishing, 2012.

\bibitem{BIB26}
S. M. Moser, ``Duality-based bounds on channel capacity," Ph.D. Dissertation, ETH Zurich, Switzerland, Jan. 2005.

\bibitem{BIB27}
I. Csiszar and J. Korner, \emph{Information Theory: Coding Theorems for Discrete Memoryless Systems}, New York: Academic, 1981.

\bibitem{BIB28}
A. Lapidoth and S. M. Moser, ``Capacity bounds via duality with applications to multiple-antenna systems on flat fading channels," \emph{IEEE Trans. Inf. Theory}, vol. 49, no. 10, pp. 2426-2467, Oct. 2003.

\bibitem{BIB29}
A. Lapidoth and S. M. Moser, ``The fading number of single-input multiple-output fading channels with memory," \emph{IEEE Trans. Inf. Theory}, vol. 52, no. 2, pp. 437-453, Feb. 2006.

\end{thebibliography}
\end{document}